\documentclass[12pt]{article}

\usepackage[margin=1in]{geometry}

\usepackage{amsmath, amssymb, amsthm}

\usepackage{enumitem}
\usepackage[hidelinks]{hyperref}
\usepackage[round]{natbib}
\setcitestyle{aysep={ }}
\usepackage{setspace}
\usepackage{newtxtext}
\usepackage{newtxmath}

\usepackage{authblk}

\usepackage{titlesec}
\titleformat{\section}{\normalfont\bfseries}{\thesection}{1em}{}
\titleformat{\subsection}{\normalfont\bfseries\itshape}{\thesubsection}{1em}{}
\titleformat{\subsubsection}{\normalfont\itshape}{\thesubsubsection}{1em}{}
\titleformat{\paragraph}[runin]{\normalfont\bfseries\itshape}{}{0pt}{}[.]
\titleformat{\subparagraph}[runin]{\normalfont\itshape}{}{0pt}{}[.]
\titlespacing*{\paragraph}{0pt}{0.5\baselineskip}{0.5em}
\titlespacing*{\subparagraph}{0pt}{0.5\baselineskip}{0.5em}

\usepackage{booktabs}

\usepackage{algorithm}
\usepackage{algpseudocode}

\newtheorem{theorem}{Theorem}
\newtheorem{proposition}{Proposition}

\newtheorem{assumption}{Assumption}

\newcommand{\Pzero}{P_0}
\newcommand{\Qbar}{\bar{Q}}
\newcommand{\Ybar}{\bar{Y}}
\newcommand{\aref}{\alpha_{\mathrm{ref}}}
\newcommand{\aadapt}{\alpha_{\mathrm{adapt}}}
\newcommand{\astab}{\alpha_{\mathrm{stab}}}
\newcommand{\op}{o_P}
\newcommand{\Op}{O_P}
\DeclareMathOperator*{\argmax}{arg\,max}

\title{\textbf{A CV-TMLE global test approach to improve power in rare disease clinical studies with multiple-component endpoints}}
\author[1]{Tianyue Zhou\thanks{Corresponding author. Email: tianyue\_zhou@berkeley.edu}}
\author[2]{Susan Gruber}
\author[3]{Hana Lee}
\author[3]{Wonyul Lee}
\author[3]{Lei Nie}
\author[1]{Mark van der Laan}
\affil[1]{Division of Biostatistics, School of Public Health, University of California, Berkeley, CA, USA}
\affil[2]{Targeted ML Solutions Inc., Cambridge, MA, USA}
\affil[3]{Center for Drug Evaluation and Research, U.S. Food and Drug Administration (FDA), Silver Spring, MD, USA}
\date{}

\begin{document}
\maketitle

\begin{abstract}
Rare disease trials face unique statistical challenges due to limited patient populations and heterogeneous clinical manifestations among patients. Multiple endpoints are often necessary to comprehensively capture treatment benefits. A global test is an approach for evaluating whether a treatment has any beneficial effect across multiple endpoints. We propose a new global test based on a weighted composite endpoint. The proposed global test employs shrinkage-based cross-validated targeted maximum likelihood estimation (CV-TMLE) to learn data-adaptive weights that maximize power while maintaining Type~I error control. Shrinkage can be tailored to incorporate existing domain knowledge, such as anticipated relative effect sizes. In simulation studies designed to reflect real rare disease trial settings, the proposed procedure demonstrated improved power over standard multiplicity adjustments and classical global tests (such as the O'Brien test), while maintaining nominal Type~I error, when effects are heterogeneous across endpoints. The proposed method simultaneously learns an optimal weighted composite outcome and provides an unbiased and efficient targeted maximum likelihood estimator (TMLE) for the average treatment effect (ATE) on that weighted outcome, with valid inference taking into account that the ATE is data dependent.
\end{abstract}

\noindent\textbf{Keywords:} CV-TMLE; global null hypothesis testing; multiple endpoints; rare disease clinical trials; data-adaptive target parameters

\medskip
\noindent\textbf{Main text word count:} 6472

\section{Introduction}

\subsection{Rare disease trials and the role of multiple endpoints}

Although more than 10{,}000 rare diseases have been identified, collectively affecting hundreds of millions of people worldwide, approved therapies exist for only a small fraction of these conditions \citep{tambuyzer2020}.
Assessing treatment efficacy in rare disease settings poses statistical challenges.
The small number of eligible patients severely constrains achievable sample sizes in randomized controlled trials (RCTs), making it difficult to demonstrate treatment benefit with adequate power \citep{cornu2013, korn2013}.
Compounding this, many rare diseases are poorly understood in terms of their natural history, which hampers the selection of appropriate study populations and clinically meaningful endpoints \citep{griggs2009, FDA2022rare}.
Rare diseases also frequently exhibit substantial genotypic and phenotypic heterogeneity within a single disorder: different patients may present with different clinical manifestations across multiple organ systems and respond to treatment in different disease domains \citep{augustine2013, chen2025challenges}.
This heterogeneity makes it particularly challenging to select a single primary endpoint that is sensitive to all clinically relevant manifestations of the disease.

For these reasons, investigators and regulators frequently rely on multiple endpoints to characterize treatment benefit.
Multiple endpoints may span distinct disease domains---such as motor function, respiratory capacity, biomarker levels, and patient-reported outcomes---allowing a broader assessment of whether a drug provides meaningful improvement \citep{huque2011, huque2013}.
This strategy is particularly valuable when it is uncertain a priori which disease domain will be most responsive to treatment within the feasible sample size.
Recent FDA guidance acknowledges that multiple endpoints can be an efficient strategy when no single endpoint adequately captures the treatment's overall clinical benefit \citep{FDA2022endpoints}, and further notes that rare disease trials should consider endpoint selection carefully given the heterogeneity of clinical presentations \citep{FDA2022rare}.

\subsection{The global null testing problem specific for rare diseases}

In rare disease trials, multiple endpoints often serve a different purpose than in common disease settings. Because limited sample sizes make it difficult to demonstrate benefit on any single endpoint, multiple endpoints may be used not to support specific labeling claims, but to answer a more fundamental question: \emph{is there any clinical benefit of this treatment?}
That is, before asking which specific endpoints are improved, one wants to know whether the overall body of evidence supports the conclusion that treatment provides benefit on at least one clinically meaningful outcome.
This is the \emph{global null hypothesis testing problem}: the global null states that the treatment has no beneficial effect on any of the endpoints under study, while the alternative is that the treatment improves at least one endpoint \citep{obrien1984, pocock1987}.
Rejection of the global null provides an omnibus assessment that the drug confers some clinical benefit, and is particularly valuable in rare disease trials where pooling information across endpoints can recover power that would otherwise be lost to the small sample size \citep{tang1993, shives2025}.

\subsection{Why standard global testing approaches can be underpowered}

Although the global null testing problem has been studied for decades, commonly used global tests can be underpowered in the rare disease setting for two key reasons.

The first limitation concerns \emph{fixed aggregation schemes}.
Classical global tests---such as the O'Brien rank-sum and generalized least-squares (GLS) tests \citep{obrien1984} and the Wei--Lachin multivariate tests \citep{wei1984}---combine endpoint-specific information into a single test statistic.
However, they typically rely on prespecified weighting schemes, most commonly equal weights or variance-based weights.
These fixed aggregation strategies perform well when treatment effects are relatively uniform across endpoints but can be substantially underpowered when the treatment effect pattern is heterogeneous.
When only a subset of endpoints carries a meaningful treatment signal, equal weighting dilutes the strong signal from informative endpoints with noise from uninformative ones.
In rare disease trials, where the treatment effect pattern across diverse disease domains is often unknown a priori, this rigidity can result in substantial power loss for the global question. Indeed, when the treatment effect pattern is heterogeneous, the power loss from fixed aggregation can be severe enough that a global test offers no advantage over traditional multiple comparison procedures that strongly control the family-wise error rate.

The second limitation concerns the \emph{lack of covariate adjustment}.
Many classical global tests are formulated as unadjusted group comparisons that do not incorporate baseline covariate information.
However, recent regulatory guidance strongly recommends prespecified adjustment for prognostic baseline covariates to improve statistical efficiency and power in randomized trials \citep{FDA2023covariates}.
Covariate adjustment is particularly important in rare disease settings: with small sample sizes, even moderate gains in efficiency from incorporating prognostic covariates can translate into meaningful increases in power \citep{FDA2023covariates}.
Ignoring available baseline information when samples are small means leaving statistical power on the table.

These two limitations---fixed aggregation that cannot adapt to heterogeneous effect patterns, and the absence of covariate adjustment that is especially costly at small sample sizes---jointly motivate the development of new global testing approaches that can (i) adaptively learn how to combine endpoint information and (ii) incorporate flexible covariate adjustment to maximize power for the global question.

\subsection{Contribution and broader relevance}

In this paper, we propose a general data-adaptive framework for global null
testing with multiple endpoints that directly addresses the two limitations
identified above. First, rather than relying on fixed aggregation schemes, we
prespecify a data-adaptive learning rule that selects the composite endpoint
weights that optimize power for the global question, allowing
the test to concentrate weight on the most informative endpoints when effects
are heterogeneous. Second, we employ CV-TMLE \citep{vanderlaan2011, zheng2011} to estimate
the treatment effect on the learned composite outcome, providing efficient and
flexible covariate adjustment together with valid inference for a data-adaptive
target parameter \citep{vanderlaan2015optimal, hubbard2016, vanderlaan2018}. The cross-validation structure separates
the data used to learn the composite weights from the data used to estimate
the resulting treatment effect, thereby preventing the overfitting and bias
that would arise from using the same observations for both parameter selection
and inference. A stabilization mechanism is further incorporated that ensures the procedure provides accurate control of the Type~I error rate.

Although motivated by rare disease RCTs, the proposed
framework is broadly applicable: the underlying TMLE methodology accommodates
observational designs, and the data-adaptive weight selection scales naturally
to settings with many correlated endpoints without incurring a penalty for multiple hypothesis testing.

The remainder of this paper is organized as follows.
Section~2 reviews related work on global testing, multiple testing, and
CV-TMLE.
Section~3 defines the statistical estimation problem, including the
weighted composite estimand and its efficient influence curve.
Section~4 presents the proposed stabilized CV-TMLE procedure.
Section~5 establishes theoretical properties: asymptotic linearity,
Type~I error control under the global null, and validity under alternatives.
Section~6 presents simulation studies evaluating the finite-sample
performance of the proposed procedure.
Section~7 concludes with a discussion.

\section{Related work}

\subsection{Global testing methods}

The problem of testing a global null hypothesis of no treatment difference across multiple endpoints has a long history in biostatistics.
O'Brien~\citep{obrien1984} proposed a family of procedures---most notably the nonparametric rank-sum test and the generalized least-squares (GLS) test---that combine endpoint-specific information into a single statistic, with power directed toward alternatives where treatment is uniformly better across all endpoints; however, both tests rely on fixed weighting schemes that can be suboptimal when effect sizes differ substantially across endpoints.
Wei and Lachin~\citep{wei1984} developed asymptotically distribution-free multivariate tests that accommodate incomplete observations and avoid strong distributional assumptions, though their performance can also degrade under heterogeneous treatment effect patterns.
Subsequent work refined these methods: Huang et al.~\citep{huang2005} corrected the asymptotic variance in O'Brien's rank-sum test to improve Type~I error control under heteroscedasticity; Logan and Tamhane~\citep{logan2005} proposed improved small-sample $t$-distribution approximations for the OLS and GLS statistics; and Tang et al.~\citep{tang1993} provided design recommendations for global tests with multiple endpoints.
Pocock et al.~\citep{pocock1987} presented a general framework applicable to any set of asymptotically normal endpoint-specific test statistics.
A separate line of work constructs a global statistic from endpoint-wise test statistics and calibrates it via the permutation or randomization distribution, naturally accounting for endpoint dependence without parametric assumptions \citep{westfall1993, pesarin2010}; however, permutation calibration alone does not address how to optimally combine endpoints under heterogeneous alternatives.
Across all of these approaches, a common limitation is the reliance on fixed aggregation schemes that cannot adapt to unknown treatment effect patterns---the gap our proposed method aims to fill.

\subsection{Multiple testing procedures}

The clinical trials multiplicity literature offers a complementary perspective to global testing by providing tools for controlling the familywise error rate (FWER) when making endpoint-specific claims.
The simplest approach is the Bonferroni correction, which divides the significance level equally among all endpoints, guaranteeing FWER control under arbitrary dependence but often at the cost of substantial conservatism, especially when endpoints are positively correlated \citep{dmitrienko2013}.
Holm's step-down procedure \citep{holm1979} uniformly dominates the Bonferroni correction by testing ordered $p$-values against progressively less stringent thresholds, while Hochberg's step-up procedure \citep{hochberg1988} is uniformly more powerful still when its validity conditions---such as positive dependence among test statistics---are satisfied \citep{sarkar1997, huque2016}.
When endpoints have a prespecified clinical hierarchy, gatekeeping procedures and the graphical approach of Bretz et al.~\citep{bretz2009, bretz2011} provide flexible frameworks for allocating the Type~I error budget across endpoint families \citep{dmitrienko2013, huque2013}.
These multiple testing procedures are primarily designed for multiplicity control when multiple tests are performed to support approval and/or labeling claims.
While one can use a multiple testing procedure as a global test by rejecting the global null whenever at least one adjusted $p$-value is significant, this ``reject-if-any'' construction does not pool evidence across endpoints and can be conservative when treatment effects are moderate and diffuse.

\subsection{CV-TMLE and data-adaptive target parameters}

In contrast to the fixed-weight global tests and multiple testing procedures described above, our approach builds on the semiparametric estimation literature---specifically, targeted maximum likelihood estimation (TMLE) and the theory of data-adaptive target parameters.
TMLE is a general framework for efficient estimation of statistical estimands: it first obtains an initial estimate of the outcome regression using potentially flexible machine learning methods, then updates it via a targeting step that solves the efficient influence function equation, yielding a doubly robust, locally efficient estimator with valid inference \citep{vanderlaan2011}.
Cross-validated TMLE (CV-TMLE) extends this framework by incorporating $V$-fold sample splitting, estimating nuisance parameters on training folds and performing the targeting step and parameter evaluation on held-out validation folds, which improves robustness with adaptive nuisance estimators and facilitates valid inference under weaker regularity conditions \citep{vanderlaan2011, zheng2011}.
Crucially, \citet{vanderlaan2015optimal} introduced a general CV-TMLE procedure for \emph{data-adaptive target parameters}---settings in which the estimand itself is chosen based on the observed data. In this framework, initial nuisance estimates are fit on training folds together with the target parameter selection, while only the TMLE targeting step is performed on validation folds; sample splitting separates the parameter selection step from estimation and inference, thereby preventing the optimistic bias that would arise from using the same data for both \citep[see also][Section~9.6]{vanderlaan2018}. This is the framework we adopt.
In our setting, the composite endpoint weights define the target parameter and are learned from the data; the CV-TMLE framework naturally accommodates this by learning the weights and fitting nuisance parameters on training folds and performing the targeting step and parameter evaluation on validation folds, producing a final cross-validated estimate with valid inference \citep{zheng2011, vanderlaan2015optimal, hubbard2016, vanderlaan2018}.

\section{Defining the statistical estimation problem}\label{sec:setup}

\subsection{Observed data and notation}\label{sec:data}

Consider a two-arm RCT enrolling $n$ subjects.
For each subject $i = 1, \ldots, n$, we observe
\begin{equation}\label{eq:data}
  O_i = (W_i, A_i, Y_i), \quad i = 1, \ldots, n,
\end{equation}
where $W_i \in \mathcal{W} \subseteq \mathbb{R}^d$ is a vector of baseline
covariates, $A_i \in \{0, 1\}$ is the treatment assignment indicator, and
$Y_i = (Y_i^{(1)}, \ldots, Y_i^{(K)})^\top \in \mathbb{R}^K$
is the vector of $K \geq 2$ endpoints.
The observations are independent and identically distributed according to
$P_0 \in \mathcal{M}$, where $\mathcal{M}$ is a nonparametric model.
Each outcome is oriented so that larger values indicate a more favorable
response.
The propensity score $g_0(1 \mid W) = P_0(A = 1 \mid W)$ is known by design;
in a trial with a 1:1 randomization ratio, $g_0(1 \mid W) = 0.5$ for all $W$.

\subsection{Causal estimand, identification, and statistical estimand}\label{sec:estimands}

For each endpoint $k \in \{1, \ldots, K\}$, let $Y^{(k)}(a)$ denote the
potential outcome under treatment $a \in \{0, 1\}$.
The causal average treatment effect (ATE) for endpoint $k$ is
\begin{equation}\label{eq:ate}
  \mathrm{ATE}_k \;=\; \mathbb{E}\bigl[Y^{(k)}(1) - Y^{(k)}(0)\bigr].
\end{equation}
Under the following identification assumptions, the causal ATE is linked
to a statistical estimand.
The \emph{randomization} assumption
($A \perp\!\!\!\perp Y(0), Y(1) \mid W$) and the \emph{positivity}
assumption ($P(A = a \mid W) > 0$ for $a \in \{0,1\}$ and all $W$ in
the support) both hold by the RCT design.
We additionally assume \emph{consistency}: the observed outcome under
treatment $a$ equals the potential outcome, i.e., if $A = a$, then
$Y = Y(a)$.
For simplicity, we further assume no loss to follow-up or missingness,
so that the full outcome vector $Y$ is observed for every subject;
note that this assumption is \emph{not} guaranteed by the RCT design
and must be assessed in practice.
Under these conditions, the causal ATE is identified by the statistical
estimand
\begin{equation}\label{eq:stat-estimand}
  \psi_k(P_0) \;=\; E_{P_0}\bigl[
    \Qbar_k(1, W) - \Qbar_k(0, W)
  \bigr],
\end{equation}
where $\Qbar_k(A, W) = E_{P_0}(Y^{(k)} \mid A, W)$ is the outcome
regression for endpoint $k$.

\subsection{Global null hypothesis}\label{sec:hypotheses}

The primary inferential question is whether the treatment provides benefit
on at least one endpoint. We formalize this as a one-sided global null test:
\begin{equation}\label{eq:global-null}
  H_0 : \psi_k(P_0) \leq 0 \;\;\text{for all } k = 1, \ldots, K
  \qquad \text{versus} \qquad
  H_1 : \psi_k(P_0) > 0 \;\;\text{for some } k.
\end{equation}
Rejection of $H_0$ supports the conclusion that treatment improves at
least one endpoint, but does not identify which endpoint(s) are improved.

\subsection{Weighted composite estimand}\label{sec:composite}

We construct a global test by combining endpoint-specific effects into a
single summary. For a weight vector
$\alpha = (\alpha_1, \ldots, \alpha_K)^\top$ on the simplex
$\Delta^{K-1} = \{ \alpha \in \mathbb{R}^K : \alpha_k \geq 0,\;
\sum_k \alpha_k = 1 \}$,
define the weighted composite outcome
$\Ybar^{(\alpha)} = \sum_{k} \alpha_k Y^{(k)}$ and the composite ATE
\begin{equation}\label{eq:composite-ate}
  \psi_\alpha(P_0) \;=\; \sum_{k=1}^K \alpha_k\, \psi_k(P_0)
  \;=\; E_{P_0}\bigl[\Qbar_\alpha(1, W) - \Qbar_\alpha(0, W)\bigr],
\end{equation}
where $\Qbar_\alpha(A, W) = \sum_k \alpha_k \Qbar_k(A, W)$.
Since each $\alpha_k \geq 0$, the global null $H_0$ implies
$\psi_\alpha(P_0) \leq 0$ for every $\alpha \in \Delta^{K-1}$.
Therefore, when $\alpha$ is fixed, a valid one-sided test of $\psi_\alpha(P_0) > 0$ constitutes a valid global null test.
When $\alpha$ is selected adaptively, additional care is needed to preserve
Type~I error control; this is addressed in Section~\ref{sec:method}.
Note that the set of endpoints can be augmented with non-linear combinations of domain-specific endpoints, such as the maximum improvement over all domains or the number of domains that showed improvement. This can potentially increase power without affecting Type~I error control.

\subsection{Efficient influence curve and asymptotic variance}\label{sec:eic}

For each endpoint-specific estimand $\psi_k(P_0)$, the efficient influence
curve (EIC) in the nonparametric model is
\begin{equation}\label{eq:eic-k}
  D^*_{\psi_k, P}(O) \;=\;
  \frac{2A - 1}{g_0(A \mid W)}
  \bigl(Y^{(k)} - \Qbar_k(A, W)\bigr)
  \;+\; \Qbar_k(1, W) - \Qbar_k(0, W) - \psi_k(P).
\end{equation}
By linearity, the EIC for the composite estimand is
$D^*_{\psi_\alpha, P}(O) = \sum_{k} \alpha_k \, D^*_{\psi_k, P}(O)$,
or equivalently,
\begin{equation}\label{eq:eic-alpha}
  D^*_{\psi_\alpha, P}(O) \;=\;
  \frac{2A - 1}{g_0(A \mid W)}
  \bigl(\Ybar^{(\alpha)} - \Qbar_\alpha(A, W)\bigr)
  \;+\; \Qbar_\alpha(1, W) - \Qbar_\alpha(0, W) - \psi_\alpha(P).
\end{equation}
The asymptotic variance of an efficient estimator of $\psi_\alpha(P_0)$ is
\begin{equation}\label{eq:var-alpha}
  \sigma^2_\alpha(P_0) \;=\; \mathrm{Var}_{P_0}\!\bigl(
    D^*_{\psi_\alpha, P_0}(O)
  \bigr) \;=\;
  \sum_{k_1, k_2}
    \alpha_{k_1} \alpha_{k_2}\, \rho_{P_0}(k_1, k_2),
\end{equation}
where $\rho_{P_0}(k_1, k_2) =
E_{P_0}[D^*_{\psi_{k_1}, P_0}(O)\, D^*_{\psi_{k_2}, P_0}(O)]$
is the covariance between endpoint-specific influence curves.
The variance $\sigma^2_\alpha(P_0)$ is thus a quadratic form in $\alpha$,
with the matrix $\boldsymbol{\rho}_{P_0} =
[\rho_{P_0}(k_1, k_2)]_{k_1, k_2}$ capturing the full dependence structure.

\section{Proposed testing framework}\label{sec:method}

We now present the stabilized CV-TMLE procedure for testing the global null hypothesis
\eqref{eq:global-null} introduced in Section~\ref{sec:hypotheses}.
The procedure has three conceptual components: (i) a criterion for
selecting optimal composite weights that maximize the signal-to-noise
ratio of the resulting test,
(ii) a cross-validation structure
that separates weight learning from estimation and inference, and
(iii) a stabilization mechanism that ensures Type~I error control
when the weight optimization is degenerate under the null.

\subsection{Oracle weight selection}\label{sec:oracle}

The power of a one-sided Wald test of
$H_0(\alpha)\!: \psi_\alpha(\Pzero) \leq 0$ at sample size $n$ is an
increasing function of the signal-to-noise ratio
$\theta_\alpha(\Pzero) = \psi_\alpha(\Pzero)/\sigma_\alpha(\Pzero)$,
where $\sigma_\alpha(\Pzero) = [\sigma^2_\alpha(\Pzero)]^{1/2}$ is
the asymptotic standard deviation defined in \eqref{eq:var-alpha}.
Among the family of valid global tests indexed by
$\alpha \in \Delta^{K-1}$, the most powerful test is obtained at the
\emph{oracle weight}
\begin{equation}\label{eq:oracle}
  \alpha_0 \;=\; \argmax_{\alpha \in \Delta^{K-1}}\;
  \theta_\alpha(\Pzero)
  \;=\; \argmax_{\alpha \in \Delta^{K-1}}\;
  \frac{\psi_\alpha(\Pzero)}{\sigma_\alpha(\Pzero)},
\end{equation}
The oracle depends on both the unknown treatment effect pattern and
the endpoint covariance structure through $\Pzero$, and thus cannot
be specified in advance.

Classical global tests correspond to particular fixed choices of
$\alpha$.
The O'Brien OLS test \citep{obrien1984} sets $\alpha_k = 1/K$, which
is optimal when treatment effects are homogeneous across endpoints.
The general framework of \citet{pocock1987} accommodates any set of
asymptotically normal endpoint-specific test statistics combined via
fixed weights.
All of these procedures use prespecified weights that cannot adapt to
an unknown effect pattern; the oracle $\alpha_0$ generalizes them by
letting the data-generating distribution determine the optimal
combination.

\subsection{Data-adaptive weight estimation}\label{sec:weight-est}

We propose to estimate $\alpha_0$ from data via
\begin{equation}\label{eq:alpha-hat}
  \hat{\alpha}(P_n) \;=\; \argmax_{\alpha \in \Delta^{K-1}}\;
  \frac{\hat{\psi}^*_{\alpha,n}}{\hat{\sigma}_{\alpha,n}},
\end{equation}
where $P_n$ denotes the empirical distribution of the observed data
$(O_1, \ldots, O_n)$,
$\hat{\psi}^*_{k,n}$ is the TMLE of the endpoint-specific treatment
effect $\psi_k(\Pzero)$ computed from the full sample,
$\hat{\psi}^*_{\alpha,n} = \sum_k \alpha_k \hat{\psi}^*_{k,n}$
is the TMLE of $\psi_\alpha(\Pzero)$, and $\hat{\sigma}_{\alpha,n}$
is the corresponding plug-in standard deviation estimate.
Because the estimand $\psi_{\hat{\alpha}}(\Pzero)$ now depends on
the observed data through the learned weights, this is a
\emph{data-adaptive target parameter} in the sense of
\citet{hubbard2016}.

The fundamental challenge is that using the same data to both select
$\alpha$ and estimate $\psi_\alpha$ creates optimistic bias: the
maximization in \eqref{eq:alpha-hat} tends to select weights that
capitalize on sampling noise, inflating the test statistic.
This is precisely the setting that motivates sample splitting and
cross-validated estimation \citep{hubbard2016}.

\subsection{CV-TMLE procedure}\label{sec:cvtmle}

We employ the CV-TMLE framework \citep{vanderlaan2011, zheng2011}
as the natural vehicle for estimation and inference with data-adaptive
target parameters.
Randomly partition the $n$ observations into $V$ approximately
equal-sized folds $\mathcal{V}_1, \ldots, \mathcal{V}_V$.
For each fold $v$, let $P_{n,v}$ and $P^1_{n,v}$ denote the empirical
measures of the training sample (all folds except $v$) and the
validation sample (fold $v$), respectively.
The procedure operates in three steps.

\paragraph{Step 1: Weight optimization on training data}
For each fold $v$, compute endpoint-specific TMLEs
$\hat{\psi}^*_{k,\mathrm{train}}$ $(k = 1, \ldots, K)$ on the training
sample, extract the estimated influence curves
$\hat{D}^*_{\psi_k}(O_i)$ for $i$ in the training set, and form the
estimated covariance matrix
$\hat{\rho}_n(k_1, k_2) = \widehat{\mathrm{Cov}}(
\hat{D}^*_{\psi_{k_1}}, \hat{D}^*_{\psi_{k_2}})$.
Solve the optimization problem
\begin{equation}\label{eq:grid}
  \aadapt^{(v)} \;=\;
  \argmax_{\alpha \in \Delta^{K-1}}
  \frac{\sum_k \alpha_k \hat{\psi}^*_{k,\mathrm{train}}}
       {\bigl(\sum_{k_1,k_2}
         \alpha_{k_1}\alpha_{k_2}\hat{\rho}_n(k_1,k_2)\bigr)^{1/2}},
\end{equation}
using, for example, grid search over the simplex, gradient-based
optimization, or any other suitable numerical method;
record both the optimal adaptive weights $\aadapt^{(v)}$ and the
corresponding test statistic
$T^*_v = \sqrt{n_{\mathrm{train}}} \max_{\alpha \in \Delta^{K-1}}
\hat{\psi}^*_{\alpha,\mathrm{train}} / \hat{\sigma}_{\alpha,\mathrm{train}}$.

\paragraph{Step 2: Estimation on validation data}
Using the learned weights $\aadapt^{(v)}$, form the
composite outcome ${\Ybar}^{(\aadapt^{(v)})}_{i} = \sum_k \alpha_{\mathrm{adapt},k}^{(v)}
Y^{(k)}_i$ for each validation-fold observation.
Form the initial composite outcome regression as the corresponding linear combination of the endpoint-specific models from Step~1,
$\Qbar_{\aadapt^{(v)}}(A, W) = \sum_k \alpha_{\mathrm{adapt},k}^{(v)} \, \Qbar_{k,\mathrm{train}}(A, W)$,
and apply a targeting step on
the validation sample to obtain the fold-specific estimate
$\hat{\psi}^*_v$ and cross-fitted influence curve values.
The key structural feature is that both the weight selection and
the initial outcome regression are determined by the training data,
while only the TMLE targeting step and parameter evaluation use
the independent validation data, precisely the cross-validated TMLE
structure for data-adaptive target parameters
\citep{vanderlaan2015optimal, vanderlaan2018}.
In practice, a pooled targeting step, in which a single fluctuation parameter $\epsilon$ is estimated across all validation folds simultaneously, can be used in place of fold-specific targeting; this can stabilize the TMLE update, which is especially beneficial in small samples.

\paragraph{Step 3: Pooled inference}
The CV-TMLE point estimate is the average across folds,
\begin{equation}\label{eq:cvtmle-est}
  \hat{\psi}^*_{\mathrm{CV}}
  \;=\; \frac{1}{V}\sum_{v=1}^V \hat{\psi}^*_v,
\end{equation}
with cross-validated variance estimate
$\hat{\sigma}^2_{\mathrm{CV}} = n^{-1}\sum_{i=1}^n
\bigl(\hat{D}^*_i - \bar{D}^*_n\bigr)^2$,
where $\bar{D}^*_n = n^{-1}\sum_{i=1}^n \hat{D}^*_i$ and $\hat{D}^*_i$ is the influence curve value
for subject $i$ from the fold in which it served as a validation
observation.
The test statistic is
$T_{\mathrm{CV}} = \sqrt{n}\,\hat{\psi}^*_{\mathrm{CV}} /
\hat{\sigma}_{\mathrm{CV}}$, referred to a $t$-distribution with
$\nu$ degrees of freedom following the small-sample correction of
\citet{logan2005}.

\subsection{The non-uniqueness problem under the null}\label{sec:nonunique}

A fundamental challenge arises when applying the procedure above to
global null hypothesis testing.
Under $H_0\!: \psi_k(\Pzero) \leq 0$ for all $k$, we have
$\psi_\alpha(\Pzero) = \sum_k \alpha_k \psi_k(\Pzero) \leq 0$
for every $\alpha \in \Delta^{K-1}$ (since $\alpha_k \geq 0$),
so the signal-to-noise ratio $\theta_\alpha(\Pzero) \leq 0$
for all $\alpha$.
At the boundary of $H_0$ where $\psi_k(\Pzero) = 0$ for all $k$,
the numerator vanishes identically and the oracle $\alpha_0$ in
\eqref{eq:oracle} is non-unique---the entire simplex is a
set of maximizers.
More generally, whenever at least two $\psi_k(\Pzero) = 0$, the
maximizer set is not a singleton, since any convex combination
supported on the zero coordinates achieves $\theta_\alpha(\Pzero) = 0$.
This is qualitatively different from the standard data-adaptive
target parameter setting \citep{vanderlaan2015optimal, hubbard2016},
where the parameter-generating algorithm is assumed to converge to a
well-defined limit.

The weight estimator $\aadapt^{(v)}$ does
not converge to a fixed point but instead drifts on the simplex,
driven by sampling noise in the training-fold estimates.
Although each fold's CV-TMLE is still a valid estimator of a
(fold-specific) composite treatment effect that is at most zero under
the null, the maximization over $\alpha$ distorts the sampling
distribution of the test statistic: it tends to select weights that
amplify chance positive deviations, consequently inflating Type~I error.

\subsection{Stabilized CV-TMLE}\label{sec:stabilized}

We resolve the non-uniqueness problem by introducing a shrinkage step
that blends the data-adaptive weights with prespecified reference
weights.
The shrinkage is governed by the strength of the evidence for a
treatment effect, as measured by the training-fold $p$-value.

Specifically, after computing the adaptive weights
$\aadapt^{(v)}$ and the maximized training-fold statistic $T^*_v$
in Step~1, we compute the $p$-value
$p_{n,v} = P_{H_0}(T^* \ge T^*_v)$,
where the probability is taken under the null distribution of the
supremum statistic $T^* = \sup_{\alpha \in \Delta^{K-1}} T_\alpha$.
Asymptotically, this distribution is that of
$\sup_\alpha \alpha^\top Z \big/ \sqrt{\alpha^\top \Sigma \alpha}$
with $Z \sim N(0, \Sigma)$, where $\Sigma$ is the true covariance matrix of the influence curves; in practice, we plug in the consistent estimate $\hat\Sigma$ for Monte Carlo evaluation.

The stabilized weights are then
\begin{equation}\label{eq:shrinkage}
  \astab^{(v)} \;=\;
  \bigl(1 - \min(1,\,\tilde{C}\, p_{n,v})\bigr)\,\aadapt^{(v)}
  \;+\; \min(1,\,\tilde{C}\, p_{n,v})\,\aref,
\end{equation}
where $\aref = (1/K, \ldots, 1/K)$ are the reference weights and
$\tilde{C} = C\log(n)$ for a constant $C > 0$.
These stabilized weights $\astab^{(v)}$ replace $\aadapt^{(v)}$ in
Step~2 of the procedure; all subsequent steps are unchanged.

When evidence for a treatment effect is weak ($p_{n,v}$ is not small),
the product $\tilde{C}\, p_{n,v}$ exceeds~1 and the weights collapse
toward $\aref$; the test then behaves like the O'Brien test, which controls Type~I error at the nominal level.
Conversely, when evidence is strong ($p_{n,v} \to 0$ faster than
$1/\log(n)$), the shrinkage coefficient vanishes and the data-adaptive
weights are recovered, preserving the procedure's ability to concentrate
weight on the most informative endpoints.

The complete procedure is summarized in Algorithm~\ref{alg:stab}.

\begin{algorithm}[t]
\caption{Stabilized cross-validated TMLE}\label{alg:stab}
\begin{algorithmic}[1]
\Require Data $(W_i, A_i, Y_i)_{i=1}^n$; folds $V$; reference weights
  $\aref$; constant $C > 0$
\State Randomly partition $\{1, \ldots, n\}$ into $V$ folds
\For{fold $v = 1, \ldots, V$}
  \State \textbf{Step 1a (Optimize):} On training data, compute
    endpoint-specific TMLEs $\hat{\psi}^*_{k,\mathrm{train}}$ and
    covariance $\hat{\rho}_n$; solve \eqref{eq:grid} for
    $\aadapt^{(v)}$; record $T^*_v$
  \State \textbf{Step 1b (Stabilize):} Compute $p_{n,v} = P_{H_0}(T^* \ge T^*_v)$
    via Monte Carlo from the supremum null distribution;
    set $\astab^{(v)}$ via \eqref{eq:shrinkage}
  \State \textbf{Step 2 (Validate):} Form composite outcome
    $\Ybar^{(\astab^{(v)})}_i$ on validation fold; form
    $\Qbar_{\astab^{(v)}} = \sum_k \alpha_{\mathrm{stab},k}^{(v)} \Qbar_{k,\mathrm{train}}$; apply TMLE targeting on
    validation data; store $\hat{\psi}^*_v$ and influence curve values
\EndFor
\State \textbf{Step 3 (Pool):} Compute
  $\hat{\psi}^*_{\mathrm{CV}} = V^{-1}\sum_v \hat{\psi}^*_v$,\;
  $\hat{\sigma}^2_{\mathrm{CV}} = n^{-1}\sum_i \bigl(\hat{D}^{*}_i - \bar{D}^*_n\bigr)^2$,\;
  $T_{\mathrm{CV}} = \sqrt{n}\,\hat{\psi}^*_{\mathrm{CV}}/
  \hat{\sigma}_{\mathrm{CV}}$
\State \textbf{Decision:} Reject $H_0$ at significance level $\gamma$ if
  $T_{\mathrm{CV}} > t_{1-\gamma}(\nu)$
\end{algorithmic}
\end{algorithm}

\subsection{Choice of tuning parameters}\label{sec:tuning}

The procedure involves three user-specified quantities: the reference
weights $\aref$, the shrinkage constant $C$, and the number of
cross-validation folds $V$.

The default reference weights $\aref = (1/K, \ldots, 1/K)$ correspond
to the O'Brien test and represent a natural, interpretable
baseline.
When domain knowledge suggests a particular effect pattern---for
example, that a primary endpoint is more likely to show benefit---the
reference weights can be adjusted accordingly.
This flexibility allows prior clinical information to be incorporated
without sacrificing the ability to adapt if the data suggest a
different pattern.

The constant $C$ controls the rate of shrinkage toward the reference
weights.
The choice of $C$ should be guided by simulation studies conducted
a priori, calibrated to balance Type~I error control against power
for the anticipated sample size and endpoint configuration.

The number of cross-validation folds $V$ may be chosen depending on
the sample size and the bias--variance trade-off inherent in the
sample-splitting scheme.
In our setting, we found that $V = 10$ folds performed well,
consistent with standard practice in cross-validated TMLE
\citep{zheng2011}.

For the degrees of freedom $\nu$ in the final test statistic
$T_{\mathrm{CV}}$, we adopt the small-sample correction of
\citet{logan2005} for the O'Brien-type statistic.
The training-fold $p$-value $p_{n,v}$ is computed by Monte Carlo
simulation from the null distribution of the supremum statistic
(see Section~\ref{sec:stabilized}), using $B = 5{,}000$ draws.

\section{Theoretical properties}\label{sec:theory}

We now establish the key theoretical properties of the stabilized
CV-TMLE procedure.
The analysis draws on the general CV-TMLE theory for data-adaptive target
parameters \citep{vanderlaan2015optimal, vanderlaan2018}
and the CV-TMLE framework
\citep{vanderlaan2011, zheng2011}, with a novel argument needed to
handle the non-uniqueness of the oracle weight under the global null.
We first state the regularity conditions and derive asymptotic
linearity of the CV-TMLE estimator
(Section~\ref{sec:assumptions}), then prove Type~I error control
under the global null (Section~\ref{sec:null-theory}), and finally
establish validity and power under alternatives
(Section~\ref{sec:alt-theory}).

\subsection{Regularity conditions and asymptotic linearity}\label{sec:assumptions}

The main result of this subsection is asymptotic linearity of the
cross-validated TMLE estimator $\hat{\psi}^*_{\mathrm{CV}}$ for the
data-adaptive composite treatment effect.
We establish this by verifying, in the present RCT setting, the
conditions of a general result for CV-TMLE with data-adaptive target
parameters due to \citet{vanderlaan2015optimal} (see also
\citealp{vanderlaan2018}, Section~9.6 and Theorem~A.1),
which we restate here in our notation.
The key feature of this result, and the reason it applies to our
Algorithm~1, is that the initial nuisance estimates are fit on
the training sample together with the data-adaptive target
parameter selection, while only the TMLE targeting step is carried out on
the validation sample.

\begin{theorem}[{\citet{vanderlaan2015optimal};
\citealp{vanderlaan2018}, Theorem~A.1}]\label{thm:cvtmle-datp}
Consider a $V$-fold cross-validation scheme with training empirical
measures $P_{n,v}$ and validation empirical measures $P^1_{n,v}$.
For each fold $v$, suppose an algorithm applied to the training
sample $P_{n,v}$:
\begin{enumerate}[label=(\alph*)]
  \item selects a target parameter
  $\Psi_v(\Pzero) = \psi_{\alpha_{n,v}}(\Pzero)$; and
  \item fits initial nuisance estimators
  $\Qbar_{\alpha_{n,v},n,v}$ and $g_{n,v}$ on $P_{n,v}$.
\end{enumerate}
A TMLE targeting step is then applied on the validation sample
$P^1_{n,v}$, producing the updated
$\Qbar^*_{\alpha_{n,v},n,v}$ and the fold-specific plug-in estimate
$\hat{\psi}^*_v = \psi_{\alpha_{n,v}}(\Qbar^*_{\alpha_{n,v},n,v})$.
Let
$D^*_v \equiv D^*_{\psi_{\alpha_{n,v}},\,
\Qbar^*_{\alpha_{n,v},n,v},\, g_{n,v}}$
denote the estimated efficient influence curve on fold~$v$, and let
$R_{\alpha}(\Qbar, g, \Qbar_0, g_0)$ denote the exact second-order
remainder for $\psi_\alpha$, satisfying
$R_\alpha(\Qbar, g, \Qbar_0, g_0)
= \psi_\alpha(\Qbar) - \psi_\alpha(\Pzero)
+ \Pzero D^*_{\psi_\alpha, \Qbar, g}$.
Assume:
\begin{enumerate}[label=(\roman*)]
  \item \emph{Cross-validated score equation:}
  $V^{-1}\sum_{v=1}^V P^1_{n,v}\, D^*_v = \op(n^{-1/2})$.
  \item \emph{Bounded influence curves:}
  $\|D^*_v\|_\infty \leq M < \infty$ with probability tending to~$1$.
  \item \emph{Influence curve convergence:}
  There exists a fixed limit $D^*(\Pzero)$ such that
  $\Pzero\{D^*_v - D^*(\Pzero)\}^2 \to 0$ in probability.
  \item \emph{Negligible remainder:}
  $V^{-1}\sum_{v=1}^V R_{\alpha_{n,v}}(
  \Qbar^*_{\alpha_{n,v},n,v}, g_{n,v}, \Qbar_0, g_0) = \op(n^{-1/2})$.
\end{enumerate}
Then the cross-validated estimator
$\hat{\psi}^*_{\mathrm{CV}} = V^{-1}\sum_{v=1}^V \hat{\psi}^*_v$
satisfies
\[
  \sqrt{n}\bigl(
    \hat{\psi}^*_{\mathrm{CV}}
    - V^{-1}\textstyle\sum_{v} \psi_{\alpha_{n,v}}(\Pzero)
  \bigr)
  \;\xrightarrow{d}\;
  N\bigl(0,\;\Pzero\{D^*(\Pzero)\}^2\bigr).
\]
\end{theorem}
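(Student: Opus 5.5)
The plan is the standard CV-TMLE expansion, applied fold by fold and then averaged. For each fold $v$, the definition of the exact remainder evaluated at the targeted estimate $(\Qbar^*_{\alpha_{n,v},n,v}, g_{n,v})$ gives
\[
  \hat{\psi}^*_v - \psi_{\alpha_{n,v}}(\Pzero)
  = -\Pzero D^*_v + R_{\alpha_{n,v}}(\Qbar^*_{\alpha_{n,v},n,v}, g_{n,v}, \Qbar_0, g_0).
\]
We add and subtract $P^1_{n,v} D^*_v$ and average over $v$, which yields
\[
  \hat{\psi}^*_{\mathrm{CV}} - V^{-1}\textstyle\sum_v \psi_{\alpha_{n,v}}(\Pzero)
  = V^{-1}\textstyle\sum_v (P^1_{n,v}-\Pzero) D^*_v
  - V^{-1}\textstyle\sum_v P^1_{n,v} D^*_v
  + V^{-1}\textstyle\sum_v R_{\alpha_{n,v}}(\cdot).
\]
By (i) the second term is $\op(n^{-1/2})$, and by (iv) the third term is $\op(n^{-1/2})$. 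Everything therefore reduces to the cross-validated empirical process term.

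For that term, we decompose
\[
  (P^1_{n,v}-\Pzero) D^*_v = (P^1_{n,v}-\Pzero) D^*(\Pzero) + (P^1_{n,v}-\Pzero)\{D^*_v - D^*(\Pzero)\}.
\]
The first piece averaged over folds equals $(P_n - \Pzero) D^*(\Pzero)$, up to the negligible discrepancy from unequal fold sizes. It has the limit $N(0,\Pzero\{D^*(\Pzero)\}^2)$ by the ordinary CLT, since (ii) and (iii) make $D^*(\Pzero)$ bounded in $L^2$.

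The second piece is the main obstacle, because $D^*_v$ depends on the validation fold through the targeting step: $\epsilon$ is fitted on $P^1_{n,v}$, or on all validation folds when pooled targeting is used. Conditional independence is therefore not immediate. We would handle this by exploiting the finite-dimensional structure of the fluctuation. Write $D^*_v = D_{v,\epsilon}$, where $\{D_{v,\epsilon} : |\epsilon|\le \delta\}$ is a one-dimensional, Lipschitz-in-$\epsilon$ family. Every member of this family is a fixed function given the training sample $P_{n,v}$, because $\alpha_{n,v}$, $\Qbar_{\alpha_{n,v},n,v}$ and $g_{n,v}$ are all training-determined. Conditional on $P_{n,v}$, this family has bounded envelope by (ii) and bracketing entropy of order $\log(1/\varepsilon)$. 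A maximal inequality for the empirical process $\sqrt{n_v}(P^1_{n,v}-\Pzero)$ over this Donsker class, together with asymptotic equicontinuity and the $L^2$ convergence in (iii), then gives
\[
  \sqrt{n}\,(P^1_{n,v}-\Pzero)\{D^*_v - D^*(\Pzero)\} = \op(1).
\]
Unconditioning by dominated convergence removes the conditioning on $P_{n,v}$. If the targeting did not use the validation data, a conditional Chebyshev bound would suffice on its own: the variance is $n_v^{-1}\Pzero\{D^*_v - D^*(\Pzero)\}^2 \to 0$.

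Because the selected weights $\alpha_{n,v}$ enter only through the training-determined functions, no convergence of $\alpha_{n,v}$ is needed beyond (iii). This is precisely why the theorem applies to the data-adaptive weights. Finally, Slutsky's lemma combines the three pieces and yields the stated limit.
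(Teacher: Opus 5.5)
Your proposal is correct and follows the standard CV-TMLE argument in the sources the paper cites; the paper itself restates the theorem without proof. The shared structure is the exact-remainder expansion, conditions (i) and (iv) to dispose of two terms, and a conditional-on-training maximal inequality over the finite-dimensional fluctuation family $\{D_{v,\epsilon}\}$ to handle $\epsilon_n$ being fit on validation or pooled data. Make two things explicit. First, that step uses hypotheses implicit in ``TMLE targeting'' rather than listed in (i)--(iv): $\epsilon_n$ stays in a compact set with probability tending to one, and the whole family, not only the fitted member covered by (ii), is uniformly bounded and Lipschitz in $\epsilon$; these hold here by Assumptions~\ref{as:rct}--\ref{as:bounded} with known $g_0$. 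Second, identifying the CLT variance with $P_0\{D^*(P_0)\}^2$ requires $P_0 D^*(P_0)=0$, which follows from (i), (iii) and your own expansion.
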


\noindent
The sample-splitting structure eliminates the need for a Donsker
class condition on $\{D^*_v\}$, permitting arbitrarily complex
data-adaptive algorithms
\citep{vanderlaan2015optimal, vanderlaan2018}.
We now introduce the regularity conditions under which
Theorem~\ref{thm:cvtmle-datp} applies to our problem,
and verify each of its four conditions in the present RCT setting.

\begin{assumption}[Randomized treatment assignment]\label{as:rct}
The data arise from an RCT in which the
treatment assignment mechanism $g_0(a \mid W) = \Pzero(A = a \mid W)$ is known
by design and satisfies
$\delta \leq g_0(1 \mid W) \leq 1 - \delta$ for some $\delta \in (0, 0.5)$ and all $W \in \mathcal{W}$.
\end{assumption}

\begin{assumption}[Bounded outcomes]\label{as:bounded}
The endpoint vector $Y = (Y^{(1)}, \ldots, Y^{(K)})$ lies in a
bounded subset of $\mathbb{R}^K$ almost surely.
\end{assumption}

\begin{assumption}[Consistent nuisance estimation]\label{as:nuisance}
For each fold $v$, the training-sample outcome regression estimator
$\Qbar_{k,n,v}$ converges in $L^2(\Pzero)$ to the true conditional mean
$\Qbar_{0,k}$, uniformly over $k$ and $v$.
\end{assumption}

We now verify the four conditions of Theorem~\ref{thm:cvtmle-datp}.
In our setting, the algorithm applied to training fold~$P_{n,v}$
selects weights $\alpha_{n,v} \in \Delta^{K-1}$, defining the
fold-specific target parameter
$\psi_{\alpha_{n,v}}(\Pzero)$, and fits the initial outcome
regression $\Qbar_{\alpha_{n,v},n,v}$ on the same training sample.
Because the treatment mechanism $g_0$ is known
(Assumption~\ref{as:rct}), we use it directly in the targeting step
rather than estimating it, so $g_{n,v} = g_0$.

\paragraph{Condition~(i): Cross-validated score equation}
If fold-specific targeting is used, the TMLE update solves the
efficient influence curve equation exactly on each validation fold:
\[
  P^1_{n,v}\, D^*_{\psi_{\alpha_{n,v}},\, \Qbar^*_{\alpha_{n,v},n,v},\, g_0} = 0
  \quad \text{for each } v.
\]
If pooled targeting is used (Section~\ref{sec:cvtmle}, Step~2),
the cross-validated average
\[
  \frac{1}{V}\sum_v P^1_{n,v}\, D^*_{\psi_{\alpha_{n,v}},\, \Qbar^*_{\alpha_{n,v},n,v},\, g_0} = 0
\]
holds exactly.

\paragraph{Condition~(ii): Bounded influence curves}
By Assumption~\ref{as:bounded} and the positivity
bound in Assumption~\ref{as:rct}, the influence curve
$D^*_{\psi_{\alpha_{n,v}},\, \Qbar^*_{\alpha_{n,v},n,v},\, g_0}$
is uniformly bounded for all $\alpha_{n,v} \in \Delta^{K-1}$ and
all $\Qbar^*_{\alpha_{n,v},n,v}$ in the range of the outcome.

\paragraph{Condition~(iii): Influence curve convergence}
Suppose $\alpha_{n,v} \to \alpha^*$ in
probability for some fixed $\alpha^* \in \Delta^{K-1}$---a
condition verified under the null in
Section~\ref{sec:null-theory} and under alternatives in
Section~\ref{sec:alt-theory}.
Because $\alpha \mapsto D^*_{\psi_\alpha, \Qbar_\alpha, g_0}$
is linear in $\alpha$ and
Assumption~\ref{as:nuisance} ensures
$\Qbar^*_{\alpha_{n,v},n,v} \to \Qbar_{0,\alpha^*}$ in
$L^2(\Pzero)$, we obtain
$\Pzero\{D^*_{\psi_{\alpha_{n,v}},\Qbar^*_{\alpha_{n,v},n,v},g_0}
- D^*_{\psi_{\alpha^*},\Qbar_{0,\alpha^*},g_0}\}^2 \to_p 0$,
verifying condition~(iii) with limit influence curve
$D^*(\Pzero) = D^*_{\psi_{\alpha^*}, \Qbar_{0,\alpha^*}, g_0}$.

\paragraph{Condition~(iv): Negligible remainder}
By the double-robust structure of the ATE remainder and
Assumption~\ref{as:rct}, the remainder vanishes identically:
$R_{\alpha_{n,v}}(\Qbar^*_{\alpha_{n,v},n,v}, g_0, \Qbar_0, g_0) = 0$
for all $\alpha_{n,v}$ and all $\Qbar^*_{\alpha_{n,v},n,v}$, so
condition~(iv) holds trivially.

\medskip
\noindent
Applying Theorem~\ref{thm:cvtmle-datp} yields the asymptotic
linearity of the CV-TMLE:
\begin{equation}\label{eq:al-final}
  \sqrt{n}\Bigl(
    \hat{\psi}^*_{\mathrm{CV}}
    - \frac{1}{V}\sum_v \psi_{\alpha_{n,v}}(\Pzero)
  \Bigr)
  \;\xrightarrow{d}\;
  N\bigl(0,\; \sigma^2_{\alpha^*}(\Pzero)\bigr),
\end{equation}
where
$\sigma^2_{\alpha^*}(\Pzero)
= \mathrm{Var}_{\Pzero}\!\bigl(
D^*_{\psi_{\alpha^*}, \Qbar_{0,\alpha^*}, g_0}(O)\bigr)$.

\subsection{Type~I error control under the global null}\label{sec:null-theory}

Under the one-sided global null $H_0\!: \psi_k(\Pzero) \leq 0$ for
all $k$, the signal-to-noise ratio $\theta_\alpha(\Pzero) \leq 0$ for
every $\alpha \in \Delta^{K-1}$. The oracle weight $\alpha_0$ in
\eqref{eq:oracle} is non-unique whenever two or more
$\psi_k(\Pzero) = 0$, since any convex combination on the zero
coordinates achieves the maximal value $\theta_\alpha(\Pzero) = 0$.
When exactly one $\psi_k(\Pzero) = 0$ (the rest strictly negative),
the oracle weight is uniquely the $k$th vertex $e_k$ of $\Delta^{K-1}$ and the consistency argument of
Section~\ref{sec:alt-theory} applies in principle; however, whenever
two or more $\psi_k(\Pzero) = 0$, condition~(iii) of
Theorem~\ref{thm:cvtmle-datp} cannot be verified by showing
$\hat{\alpha}(P_{n,v}) \to \alpha_0$.
The stabilization mechanism resolves this: we show that under $H_0$
the stabilized weights converge to the pre-specified reference
$\aref$, which then serves as the required fixed limit $\alpha^*$.

\begin{proposition}[Type~I error control]\label{prop:null}
Under Assumptions~\ref{as:rct}--\ref{as:nuisance} and the one-sided
global null $H_0\!: \psi_k(\Pzero) \leq 0$ for all $k$, the stabilized
CV-TMLE test has asymptotic
Type~I error rate at most $\gamma$, with equality when
$\psi_k(\Pzero) = 0$ for all $k$.
\end{proposition}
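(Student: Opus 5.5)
The plan is to show that under $H_0$ the stabilized weights converge in probability to $\aref$. That fixed limit lets us invoke Theorem~\ref{thm:cvtmle-datp} with $\alpha^*=\aref$. We then read off the Type~I error from \eqref{eq:al-final} together with the sign of the centering term.

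\textbf{Step 1 (shrinkage coefficient tends to one).} Fix a fold $v$.
\begin{itemize}[label={}]
\item \emph{The statistic $T^*_v$ stays bounded.} Under $H_0$ each training TMLE satisfies $\sqrt{n_{\mathrm{train}}}\,\hat\psi^*_{k,\mathrm{train}} = \sqrt{n_{\mathrm{train}}}\,\psi_k(\Pzero) + Z_{k,n} + \op(1)$, with $Z_n \rightsquigarrow N(0,\Sigma)$. This uses the same RCT arguments (known $g_0$, zero remainder, Assumptions~\ref{as:rct}--\ref{as:nuisance}). Since $\psi_k(\Pzero)\le 0$, we get $T^*_v \le \sup_\alpha \alpha^\top Z_n/\sqrt{\alpha^\top\hat\rho_n\alpha} + \op(1)$. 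We need $\Sigma$ positive definite, or at least $\min_{\alpha\in\Delta^{K-1}}\alpha^\top\Sigma\alpha>0$, which I would add as a mild condition. Given that, $\hat\rho_n\to\Sigma$ and the right-hand side is $\Op(1)$. Hence $T^*_v=\Op(1)$, or it tends to $-\infty$ when all effects are strictly negative.
\item \emph{The $p$-value stays away from zero.} The null survival function of $T^*$ is continuous and strictly positive. It is evaluated with $\hat\Sigma\to\Sigma$, so $p_{n,v}$ is bounded away from $0$ in probability.
\item \emph{The coefficient equals one with probability tending to one.} Because $\tilde C=C\log n\to\infty$, we get $\tilde C p_{n,v}\to\infty$. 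Hence $\min(1,\tilde Cp_{n,v})=1$ with probability tending to one, and $\astab^{(v)}=\aref$ on that event.
\end{itemize}
So $\alpha_{n,v}\to_p\aref$. In fact the weights equal $\aref$ exactly with probability tending to one, for every $v$.

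\textbf{Step 2 (asymptotic linearity with a fixed limit).} Conditions (i), (ii) and (iv) were verified above for any $\alpha_{n,v}$. Condition (iii) now holds with $\alpha^*=\aref$, by linearity of $\alpha\mapsto D^*$. Then \eqref{eq:al-final} gives
\[
\sqrt n\bigl(\hat\psi^*_{\mathrm{CV}}-\bar\psi_n\bigr)\rightsquigarrow N\bigl(0,\sigma^2_{\aref}(\Pzero)\bigr),\qquad \bar\psi_n=V^{-1}\textstyle\sum_v\psi_{\alpha_{n,v}}(\Pzero).
\]
Next, $\hat\sigma^2_{\mathrm{CV}}\to_p\sigma^2_{\aref}(\Pzero)$. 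This follows from condition (iii), boundedness, and a law of large numbers on the validation folds, conditional on training data. Then $\sigma^2_{\aref}>0$ holds by the same nondegeneracy condition.

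\textbf{Step 3 (Type~I error).} We have $\psi_{\alpha_{n,v}}(\Pzero)=\sum_k\alpha_{n,v,k}\psi_k(\Pzero)\le 0$, since the weights are nonnegative. Hence $\bar\psi_n\le0$, and
\[
T_{\mathrm{CV}}\le \sqrt n(\hat\psi^*_{\mathrm{CV}}-\bar\psi_n)/\hat\sigma_{\mathrm{CV}}\rightsquigarrow N(0,1).
\]
Now $t_{1-\gamma}(\nu)\to z_{1-\gamma}$ as $\nu\to\infty$, and $\nu$ grows with $n$. So $\limsup P(T_{\mathrm{CV}}>t_{1-\gamma}(\nu))\le\gamma$. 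When all $\psi_k(\Pzero)=0$, we have $\bar\psi_n=0$ exactly, so the inequality becomes an equality and the limit equals $\gamma$.

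\textbf{Main obstacle.} I expect Step 1 to be the hardest. It needs a uniform-in-$\alpha$ expansion of the training-fold TMLEs. It also needs care with the random, noncontinuous shrinkage map and with Monte Carlo error in $p_{n,v}$. I would handle these by working on the event $\{\tilde Cp_{n,v}\ge1\}$ and showing its probability tends to one. Two further points: the maximized statistic is a continuous functional of the finite-dimensional vector $(Z_n,\hat\rho_n)$, and the Monte Carlo error is $O(B^{-1/2})$ around a limit bounded away from zero. On that event the analysis reduces to a fixed-weight CV-TMLE, so no uniform argument over $\alpha$ is needed in Step 2.
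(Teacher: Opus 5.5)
Your proposal is correct and follows the paper's proof essentially step for step. As in the paper, the pathwise bound $T_\alpha \le T^{(0)}_\alpha$ under $H_0$ keeps $p_{n,v}$ away from zero, so $\tilde C p_{n,v}\to\infty$ forces $\astab^{(v)}=\aref$ with probability tending to one, after which Theorem~\ref{thm:cvtmle-datp} applies with $\alpha^*=\aref$ and the non-positive centering term gives size at most $\gamma$. Your Step~3 bounds $T_{\mathrm{CV}}$ directly by the statistic centered at $V^{-1}\sum_v\psi_{\alpha_{n,v}}(\Pzero)\le 0$, a slightly cleaner version of the paper's case split on the sign of $\psi_{\aref}(\Pzero)$; your nondegeneracy condition on $\Sigma$ and the remark $t_{1-\gamma}(\nu)\to z_{1-\gamma}$ make explicit points the paper leaves implicit.
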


\begin{proof}

\emph{Step~1: Weight convergence.}
We show $\astab^{(v)} \xrightarrow{P} \aref$ for each fold $v$.
The $p$-value $p_{n,v}$ is computed from the asymptotic point-null
distribution of the supremum statistic (i.e., the limit distribution
when $\psi_k(\Pzero) = 0$ for all $k$).
Under any $\Pzero$ satisfying $H_0$, we show that $p_{n,v}$
asymptotically stochastically dominates a $\mathrm{Uniform}(0,1)$
distribution, in three stages.
First, for every fixed $\alpha \in \Delta^{K-1}$, define the centered statistic
\[
  T_\alpha^{(0)}
  \;=\;
  \frac{\sqrt{n_{\mathrm{train}}}\bigl(
    \hat{\psi}^*_{\alpha,\mathrm{train}} - \psi_\alpha(\Pzero)
  \bigr)}{\hat{\sigma}_{\alpha,\mathrm{train}}}.
\]
By standard TMLE asymptotics and Slutsky's theorem,
$T_\alpha^{(0)} \xrightarrow{d} N(0,1)$, and the process
$\alpha \mapsto T_\alpha^{(0)}$ converges weakly over the simplex
to the standard point-null Gaussian process driving the limit
distribution of $\sup_\alpha T_\alpha^{(0)}$.
Second, observe the exact finite-sample algebraic identity
\[
  T_\alpha
  \;=\;
  T_\alpha^{(0)}
  \;+\;
  \frac{\sqrt{n_{\mathrm{train}}}\,\psi_\alpha(\Pzero)}
       {\hat{\sigma}_{\alpha,\mathrm{train}}}.
\]
Under $H_0$, $\psi_k(\Pzero) \leq 0$ for all $k$, and since
$\alpha_k \geq 0$ this implies
$\psi_\alpha(\Pzero) = \sum_k \alpha_k \psi_k(\Pzero) \leq 0$
for every $\alpha \in \Delta^{K-1}$.
Combined with the strict positivity of the estimated standard error
$\hat{\sigma}_{\alpha,\mathrm{train}} > 0$, the shift term is
\emph{deterministically} non-positive.
Thus $T_\alpha \leq T_\alpha^{(0)}$ holds pathwise (exactly) in
every finite sample, simultaneously across all $\alpha$, and taking
the supremum preserves the inequality:
$T^*_v = \sup_\alpha T_\alpha \leq \sup_\alpha T_\alpha^{(0)}$.
Third, $p_{n,v}$ is evaluated using the asymptotic distribution of
$\sup_\alpha T_\alpha^{(0)}$, so the pathwise inequality above yields
asymptotic stochastic dominance,
$\limsup_{n \to \infty} P(p_{n,v} \leq \epsilon) \leq \epsilon$
for every $\epsilon \in (0,1)$.
In particular, $p_{n,v}$ does not vanish in probability, and because
$\tilde{C} = C\log(n) \to \infty$, the product
$\tilde{C}\, p_{n,v} \xrightarrow{P} \infty$.
After truncation at~1, the shrinkage coefficient
$\min(1, \tilde{C}\, p_{n,v}) \to 1$ in probability, and therefore
\[
  \astab^{(v)}
  = \bigl(1 - \min(1,\tilde{C}\,p_{n,v})\bigr)\,\aadapt^{(v)}
  + \min(1,\tilde{C}\,p_{n,v})\,\aref
  \;\xrightarrow{P}\; \aref.
\]

\emph{Step~2: Condition verification.}
With $\astab^{(v)} \xrightarrow{P} \aref$, condition~(iii) of
Theorem~\ref{thm:cvtmle-datp} is verified with $\alpha^* = \aref$
by exactly the argument given for condition~(iii) in
Section~\ref{sec:assumptions}.

\emph{Step~3: Size control.}
Because $\tilde{C}\, p_{n,v} \xrightarrow{P} \infty$,
it follows that $P(\tilde{C}\, p_{n,v} \geq 1) \to 1$, so the
truncation ensures that $\astab^{(v)} = \aref$ \emph{exactly}
with probability tending to~$1$.
This exact match guarantees
$\sqrt{n}(\astab^{(v)} - \aref) \xrightarrow{P} 0$, allowing us to
replace the data-adaptive estimand with the fixed reference estimand
without introducing any asymptotic bias.
Applying Theorem~\ref{thm:cvtmle-datp} with $\alpha^* = \aref$
yields the asymptotic linearity of the CV-TMLE estimator
$\hat{\psi}^*_{\mathrm{CV}}$.
We can then exactly decompose the final test statistic as
\[
  T_{\mathrm{CV}}
  \;=\;
  \frac{\sqrt{n}\bigl(\hat{\psi}^*_{\mathrm{CV}}
    - \psi_{\aref}(\Pzero)\bigr)}{\hat{\sigma}_{\mathrm{CV}}}
  \;+\;
  \frac{\sqrt{n}\,\psi_{\aref}(\Pzero)}{\hat{\sigma}_{\mathrm{CV}}}.
\]
The first term converges in distribution to $N(0,1)$ by standard
CV-TMLE asymptotics and Slutsky's theorem.
Under $H_0$, $\psi_{\aref}(\Pzero) \leq 0$, and we case-split on the
shift term.
If $\psi_{\aref}(\Pzero) = 0$, the shift term is exactly zero,
$T_{\mathrm{CV}} \xrightarrow{d} N(0,1)$, and the asymptotic
rejection probability is exactly $\gamma$.
If $\psi_{\aref}(\Pzero) < 0$, the shift term diverges to $-\infty$
in probability, driving $T_{\mathrm{CV}} \xrightarrow{P} -\infty$
and forcing the rejection probability to approach $0$.
In either case, the asymptotic Type~I error rate is at most $\gamma$,
with equality at the point null.
\end{proof}

\subsection{Validity and power under alternatives}\label{sec:alt-theory}

We now verify condition~(iii) of Theorem~\ref{thm:cvtmle-datp} under
alternatives by showing that the stabilized weights converge to a
well-defined limit, and derive the resulting validity
guarantees.  We consider fixed alternatives first, then study
asymptotic local power.

\begin{proposition}[Validity under fixed alternatives]\label{prop:alt}
Under Assumptions~\ref{as:rct}--\ref{as:nuisance} and any fixed alternative
$\Pzero$ with $\psi_k(\Pzero) > 0$ for at least one $k$, and
$\alpha_0$ the unique maximizer of $\theta_\alpha(\Pzero)$, the stabilized
CV-TMLE satisfies
\[
  \sqrt{n}\Bigl(
    \hat{\psi}^*_{\mathrm{CV}}
    - \frac{1}{V}\sum_{v=1}^V \psi_{\astab^{(v)}}(\Pzero)
  \Bigr)
  \;\xrightarrow{d}\;
  N\bigl(0,\; \sigma^2_{\alpha_0}(\Pzero)\bigr),
\]
and the cross-validated variance estimator
$\hat{\sigma}^2_{\mathrm{CV}}$ is consistent for
$\sigma^2_{\alpha_0}(\Pzero)$.
In particular, the test is consistent: its power converges to~$1$.
\end{proposition}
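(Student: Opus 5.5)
The argument follows the template of Proposition~\ref{prop:null}, with the roles reversed. We show that under a fixed alternative the shrinkage coefficient vanishes and the stabilized weights converge to the oracle $\alpha_0$. We then feed $\alpha^*=\alpha_0$ into condition~(iii) of Theorem~\ref{thm:cvtmle-datp}, using the verification already given in Section~\ref{sec:assumptions}.

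\emph{Step 1: consistency of the adaptive weights.} On each training fold, the endpoint-specific TMLEs are asymptotically linear, so $\hat\psi^*_{k,\mathrm{train}}\to\psi_k(\Pzero)$ at rate $\Op(n^{-1/2})$. Under Assumptions~\ref{as:rct}--\ref{as:nuisance} and the boundedness of the influence curves, $\hat\rho_n\to\boldsymbol\rho_{\Pzero}$ in probability. We assume $\boldsymbol\rho_{\Pzero}$ is positive definite, or at least that $\sigma_\alpha(\Pzero)$ is bounded away from zero on the simplex. Under that condition the map $\alpha\mapsto \hat\psi^*_{\alpha,\mathrm{train}}/\hat\sigma_{\alpha,\mathrm{train}}$ converges to $\theta_\alpha(\Pzero)$ uniformly over the compact set $\Delta^{K-1}$. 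The numerators are linear in $\alpha$, and the denominators are square roots of quadratic forms bounded below. Since $\theta_\alpha(\Pzero)$ is continuous with unique maximizer $\alpha_0$, the standard argmax (M-estimation) consistency argument gives $\aadapt^{(v)}\xrightarrow{P}\alpha_0$.

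\emph{Step 2: the shrinkage vanishes.} Because some $\psi_k(\Pzero)>0$, we have $\theta_{\alpha_0}(\Pzero)>0$. By Step~1, $T^*_v/\sqrt{n_{\mathrm{train}}}\to\theta_{\alpha_0}(\Pzero)$, so $T^*_v\ge c\sqrt n$ with probability tending to one for some $c>0$. The reference null law of $\sup_\alpha \alpha^\top Z/\sqrt{\alpha^\top\Sigma\alpha}$ is a supremum of standard normals, and Borell--TIS (or simply the bound $\|\Sigma^{-1/2}Z\|$ via Cauchy--Schwarz, giving a $\chi_K$ tail) yields a tail bounded by $\exp(-c'n)$ for some $c'>0$. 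The same holds with $\hat\Sigma$ plugged in, since $\hat\Sigma$ is consistent; the Monte Carlo error is treated as negligible, i.e.\ $B\to\infty$ or exact evaluation. Hence $C\log(n)\,p_{n,v}\le C\log(n)e^{-c'n}\to0$, so $\astab^{(v)}-\aadapt^{(v)}\xrightarrow{P}0$ and therefore $\astab^{(v)}\xrightarrow{P}\alpha_0$.

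\emph{Step 3: conclusion.} Conditions~(i), (ii) and~(iv) were verified in Section~\ref{sec:assumptions} for arbitrary training-determined weights. Condition~(iii) follows from Step~2 with $\alpha^*=\alpha_0$, by the linearity argument in that section. Theorem~\ref{thm:cvtmle-datp} then gives the displayed limit with variance $\sigma^2_{\alpha_0}(\Pzero)$.

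For the variance estimator, we write $\hat\sigma^2_{\mathrm{CV}}$ as an average over folds of $P^1_{n,v}(D^*_v)^2$ minus $(\bar D^*_n)^2$. Conditional on the training data, each $P^1_{n,v}(D^*_v)^2-\Pzero(D^*_v)^2=\Op(n^{-1/2})$ by boundedness. Moreover $\Pzero(D^*_v)^2\to\Pzero D^*(\Pzero)^2$ by~(iii), and $\bar D^*_n=0$ or $\op(1)$ by condition~(i). This gives consistency.

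For power, we decompose $T_{\mathrm{CV}}$ as in Step~3 of Proposition~\ref{prop:null}. The centered term is $\Op(1)$. The shift term $\sqrt n\,V^{-1}\sum_v\psi_{\astab^{(v)}}(\Pzero)/\hat\sigma_{\mathrm{CV}}$ behaves like $\sqrt n\,\psi_{\alpha_0}(\Pzero)/\sigma_{\alpha_0}(\Pzero)$ by continuity, and this diverges to $+\infty$. Since $t_{1-\gamma}(\nu)$ stays bounded, the power tends to one.

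The main obstacle is Step~1's uniform convergence of the ratio, which needs $\sigma_\alpha(\Pzero)$ bounded away from zero on the simplex. I would state this as an implicit nondegeneracy condition, namely positive definiteness of $\boldsymbol\rho_{\Pzero}$, and it can be added alongside the uniqueness of $\alpha_0$. A secondary subtlety is that the exponential tail bound in Step~2 must hold uniformly under the plugged-in $\hat\Sigma$. I would handle this by bounding the eigenvalues of $\hat\Sigma$ on a high-probability event.
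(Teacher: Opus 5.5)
Your proposal is correct and follows the paper's proof essentially step for step: argmax consistency of $\aadapt^{(v)}$, then exponential decay of $p_{n,v}$ so that $C\log(n)\,p_{n,v}\to 0$ and $\astab^{(v)}\to\alpha_0$, then Theorem~\ref{thm:cvtmle-datp} with $\alpha^*=\alpha_0$, and finally divergence of the shift term in $T_{\mathrm{CV}}$. Your additions make explicit what the paper only asserts: the nondegeneracy of $\boldsymbol\rho_{\Pzero}$ that the paper assumes implicitly when it calls $\theta_\alpha$ continuous on the simplex, and the $\chi_K$ domination behind the exponential rate; since that bound does not depend on the plugged-in $\hat\Sigma$, your ``secondary subtlety'' largely disappears.
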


\begin{proof}

\emph{Step~1: Well-definedness of the oracle weight.}
Under any fixed alternative where $\psi_k(\Pzero) > 0$ for at least one
$k$, the oracle weight $\alpha_0$ in \eqref{eq:oracle} is
well-defined.
The signal-to-noise ratio
$\theta_\alpha(\Pzero) = \psi_\alpha(\Pzero)/\sigma_\alpha(\Pzero)$
is a continuous function on the compact simplex $\Delta^{K-1}$, so a maximizer exists.

\emph{Step~2: Consistency of the weight estimator.}
We show $\aadapt^{(v)} \to \alpha_0$ in probability.
Because the propensity score is known by design (Assumption~\ref{as:rct}), the training-fold
endpoint-specific TMLEs $\hat{\psi}^*_{k,\mathrm{train}}$ are
consistent for $\psi_k(\Pzero)$.
Under Assumption~\ref{as:nuisance}, the initial outcome models converge
to the true conditional means, which ensures the estimated influence
curves converge to the true efficient influence curves.
Combined with the boundedness of the influence curves
(Assumption~\ref{as:bounded}) and the law of large numbers,
the empirical covariance
$\hat{\rho}_n(k_1, k_2) \xrightarrow{P} \rho_{\Pzero}(k_1, k_2)$.
The signal-to-noise ratio $\theta_\alpha$ is a continuous function of
$(\psi_1, \ldots, \psi_K, \boldsymbol{\rho})$ for each $\alpha$, and
under the alternative $\alpha_0$ is an isolated maximizer.
By the argmax continuous mapping theorem \citep{vaart1998},
$\aadapt^{(v)} \to \alpha_0$ in probability.

\emph{Step~3: Stabilization and condition verification.}
Under the alternative, the training-fold test statistic $T^*_v$
diverges at rate $\sqrt{n}$ because
$\theta_{\alpha_0}(\Pzero) > 0$, so the associated $p$-value
$p_{n,v} \to 0$ at an exponential rate.
Since $\tilde{C} = C\log(n)$ grows only logarithmically,
$\tilde{C}\, p_{n,v} \to 0$, and therefore
$\astab^{(v)} \to \aadapt^{(v)} \to \alpha_0$ in probability.
This verifies condition~(iii) of Theorem~\ref{thm:cvtmle-datp} with
$\alpha^* = \alpha_0$.

\emph{Step~4: Asymptotic normality and consistency.}
Applying Theorem~\ref{thm:cvtmle-datp} with $\alpha^* = \alpha_0$
and the convergence $\astab^{(v)} \xrightarrow{P} \alpha_0$ from Step~3
yields the claimed asymptotic normality for the data-adaptive target
parameter.
Because $\astab^{(v)} \xrightarrow{P} \alpha_0$, the data-adaptive
estimand satisfies
$V^{-1}\sum_v \psi_{\astab^{(v)}}(\Pzero) \xrightarrow{P}
\psi_{\alpha_0}(\Pzero) > 0$
by the continuity of $\alpha \mapsto \psi_\alpha(\Pzero)$.
Variance consistency follows because the cross-validated influence
curve values $\hat{D}^*_i$ converge to
$D^*_{\psi_{\alpha_0}, \Qbar_{\alpha_0}, g_0}(O_i)$, so
$\hat{\sigma}^2_{\mathrm{CV}} \xrightarrow{P}
\sigma^2_{\alpha_0}(\Pzero)$.
Because the data-adaptive estimand converges to a strictly positive
constant and the variance converges to a finite value, the uncentered
test statistic diverges:
$T_{\mathrm{CV}} \xrightarrow{P} \infty$.
Thus, the test is consistent and power converges to~$1$.
\end{proof}

\paragraph{Local alternatives}
The consistency result above holds for every fixed alternative, but
every consistent test has power tending to~$1$ under fixed
alternatives.
To distinguish tests by their power, we consider a sequence of
\emph{local alternatives} $P_{0,n}$ converging to a fixed
distribution $\Pzero$ on the boundary of the global null
(i.e., $\psi_k(\Pzero) = 0$ for all $k$), satisfying
$\psi_k(P_{0,n}) = \varepsilon_k / \sqrt{n}$ for fixed constants
$\varepsilon_k$, at least one positive.

\begin{proposition}[Local power]\label{prop:local}
Under Assumptions~\ref{as:rct}--\ref{as:nuisance} and a sequence of
local alternatives $P_{0,n}$ with
$\psi_k(P_{0,n}) = \varepsilon_k / \sqrt{n}$ for fixed
$\varepsilon = (\varepsilon_1, \ldots, \varepsilon_K)$,
the stabilized CV-TMLE test has asymptotic local power
\[
  \pi(\varepsilon)
  \;=\;
  \bar{\Phi}\!\Bigl(
    z_{1-\gamma}
    - \frac{\textstyle\sum_k \alpha_{\mathrm{ref},k}\,\varepsilon_k}
           {\sigma_{\aref}(\Pzero)}
  \Bigr),
\]
where $\bar{\Phi} = 1 - \Phi$ is the survival function of the
standard normal distribution and $\sigma_{\aref}(\Pzero)$ denotes
the standard deviation of the influence curve at~$\aref$ under~$\Pzero$.
\end{proposition}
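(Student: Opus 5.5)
The plan is to mirror the proof of Proposition~\ref{prop:null}, replacing the fixed null distribution by the drifting sequence $P_{0,n}$. The key observation is that under $\sqrt{n}$-local alternatives the training-fold evidence stays $O_P(1)$, so the shrinkage collapses onto $\aref$ exactly, with probability tending to one. Once the weights are frozen at $\aref$, the test is a fixed-weight CV-TMLE Wald test of $\psi_{\aref}$, and its local power follows from a shifted normal limit.

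\emph{Step 1 (training-fold $p$-values do not vanish).} I will write the training statistic as in Proposition~\ref{prop:null}:
\[
T_\alpha = T^{(0)}_\alpha + \frac{\sqrt{n_{\mathrm{train}}}\,\psi_\alpha(P_{0,n})}{\hat\sigma_{\alpha,\mathrm{train}}}.
\]
By construction,
\[
\sqrt{n_{\mathrm{train}}}\,\psi_\alpha(P_{0,n}) = \sqrt{(V-1)/V}\,\sum_k\alpha_k\varepsilon_k,
\]
which is bounded uniformly on the simplex. Under $P_{0,n}$ the centred process $\alpha\mapsto T^{(0)}_\alpha$ converges weakly to the same Gaussian process $\alpha^\top Z/\sqrt{\alpha^\top\Sigma\alpha}$ as at $\Pzero$. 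To justify this I will use a triangular-array (Lindeberg) CLT for the bounded efficient influence curves, which applies by Assumptions~\ref{as:rct}--\ref{as:bounded}. Because the remainder vanishes exactly when $g_0$ is known, no rate conditions are needed. Hence $T^*_v$ converges in distribution to
\[
\sup_\alpha\Bigl\{\alpha^\top Z/\sqrt{\alpha^\top\Sigma\alpha}+c\,\alpha^\top\varepsilon/\sigma_\alpha(\Pzero)\Bigr\},
\]
which is a tight random variable. Since $\hat\Sigma$ is consistent, $p_{n,v}$ converges in distribution to a limit that puts no mass at $0$, because the supremum above is finite almost surely. Therefore $p_{n,v}$ is bounded away from zero in probability, and so $C\log(n)\,p_{n,v}\to\infty$. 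This gives $P(\astab^{(v)}=\aref)\to1$ for every $v$, exactly as in Step~3 of Proposition~\ref{prop:null}.

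\emph{Step 2 (asymptotic linearity along $P_{0,n}$).} On the event $\{\astab^{(v)}=\aref\ \forall v\}$, the CV-TMLE coincides with the fixed-weight CV-TMLE of $\psi_{\aref}$. I will apply Theorem~\ref{thm:cvtmle-datp} in its triangular-array form. Conditions (i), (ii) and (iv) hold verbatim, because the score equation is solved exactly, the influence curves are bounded, and the remainder is zero. For condition (iii) I will use Assumption~\ref{as:nuisance} along $P_{0,n}$, with the limit $D^*_{\psi_{\aref},\Qbar_{0,\aref},g_0}$ taken under $\Pzero$; this requires that $P_{0,n}\to\Pzero$ in a sense that makes the $L^2$ variances converge. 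This yields
\[
\sqrt{n}\bigl(\hat\psi^*_{\mathrm{CV}}-\psi_{\aref}(P_{0,n})\bigr)\xrightarrow{d}N\bigl(0,\sigma^2_{\aref}(\Pzero)\bigr),
\]
together with $\hat\sigma_{\mathrm{CV}}\to\sigma_{\aref}(\Pzero)$ in probability.

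\emph{Step 3 (power).} I will decompose $T_{\mathrm{CV}}$ as in Proposition~\ref{prop:null}. The shift term is
\[
\frac{\sqrt{n}\,\psi_{\aref}(P_{0,n})}{\hat\sigma_{\mathrm{CV}}}\xrightarrow{P}\frac{\sum_k\alpha_{\mathrm{ref},k}\varepsilon_k}{\sigma_{\aref}(\Pzero)},
\]
so $T_{\mathrm{CV}}$ converges in distribution to a normal with that mean and unit variance. Since $\nu\to\infty$, the critical value satisfies $t_{1-\gamma}(\nu)\to z_{1-\gamma}$. Slutsky's theorem then gives the stated $\pi(\varepsilon)$.

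The main obstacle is Step 1 and the triangular-array version of Theorem~\ref{thm:cvtmle-datp}. Both need uniform-in-$n$ control of the nuisance estimators and of the weak convergence of the supremum process under the moving law. My first approach would be to assume $P_{0,n}$ is a smooth (Hellinger-differentiable) submodel through $\Pzero$. Contiguity would then transfer $o_P$ statements from $\Pzero$ to $P_{0,n}$, and Le Cam's third lemma would produce the drift, which should match the direct $\varepsilon$-shift computation above.
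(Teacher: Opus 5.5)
Your proposal is correct and follows the paper's proof: show $T^*_v=\Op(1)$ under $P_{0,n}$ so that $p_{n,v}$ stays bounded away from zero, hence $C\log(n)\,p_{n,v}\to\infty$ and $\astab^{(v)}=\aref$ with probability tending to one, then apply Theorem~\ref{thm:cvtmle-datp} at $\aref$ and read off the shift $\sum_k\alpha_{\mathrm{ref},k}\varepsilon_k/\sigma_{\aref}(\Pzero)$. The only difference is that the paper gets tightness of $T^*_v$ by transferring it from the null via contiguity (your fallback route) instead of your explicit triangular-array limit, and it leaves implicit both the Hellinger-differentiability of $P_{0,n}$ and the convergence $t_{1-\gamma}(\nu)\to z_{1-\gamma}$, which you make explicit.
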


\begin{proof}
By standard local asymptotic normality (LAN) theory, the $n^{-1/2}$-local alternatives $\{P_{0,n}\}$ are contiguous with respect to the null \citep{vaart1998}, so any statistic that is $\Op(1)$ under the null remains $\Op(1)$ under $\{P_{0,n}\}$.
In particular, $T^*_v = \Op(1)$, and therefore $p_{n,v}$ is stochastically bounded away from zero.
Consequently,
$\tilde{C}\, p_{n,v} = C\log(n)\,p_{n,v} \xrightarrow{P} \infty$,
and the stabilized weights collapse to the reference:
$\astab^{(v)} \xrightarrow{P} \aref$.
This verifies condition~(iii) of Theorem~\ref{thm:cvtmle-datp} with
$\alpha^* = \aref$, yielding
$\sqrt{n}(\hat{\psi}^*_{\mathrm{CV}}
- \psi_{\aref}(P_{0,n})) \xrightarrow{d}
N(0, \sigma^2_{\aref}(\Pzero))$
by standard contiguity arguments.
Since
$\sqrt{n}\,\psi_{\aref}(P_{0,n})
= \sum_k \alpha_{\mathrm{ref},k}\,\varepsilon_k$,
the test statistic satisfies
$T_{\mathrm{CV}} \xrightarrow{d}
N\bigl(\sum_k \alpha_{\mathrm{ref},k}\,\varepsilon_k
/ \sigma_{\aref}(\Pzero),\; 1\bigr)$,
giving the stated rejection probability.
\end{proof}

\noindent
Proposition~\ref{prop:local} reveals that the asymptotic local power
of the stabilized CV-TMLE test equals that of the non-adaptive test
with fixed weights~$\aref$.
This is an inherent consequence of the $\log(n)$-rate stabilization:
the local signal is too weak for the training-fold test to
distinguish from the null, so the stabilization mechanism dominates
and the data-adaptive weights cannot improve upon $\aref$
asymptotically.
The practical advantage of the procedure therefore rests on
\emph{finite-sample} power gains at moderate effect sizes and on
robustness to weight misspecification---precisely the properties
investigated in the simulation studies of
Section~\ref{sec:simulations}.

\section{Simulation studies}\label{sec:simulations}

We present two simulation studies evaluating the finite-sample performance of the stabilized CV-TMLE procedure proposed in Section~\ref{sec:method}.
The first study uses a linear data-generating process (DGP) to compare the stabilized CV-TMLE with commonly used global tests and multiplicity adjustment procedures under different treatment effect patterns, isolating the effect of the testing procedure by holding the covariate-adjustment method (TMLE) fixed across all comparators.
The second study uses a realistic DGP calibrated to the ELAPRASE (idursulfase) Phase~III trial for Hunter syndrome \citep{elaprase2018}, comparing the proposed stabilized CV-TMLE against the unadjusted O'Brien rank-sum test that was used in the actual trial.
All simulations use seed 202701 and 1{,}000 replications for full reproducibility, with all tests applied at the one-sided level $\gamma = 0.025$.

\subsection{Study 1}\label{sec:sim-simple}

We generate $n = 50$ independent observations with $K = 2$ continuous outcomes from the following DGP:
\begin{align}
  W_1 &\sim \mathrm{Uniform}\{5, \ldots, 18\}, \quad W_2 \sim \mathrm{Bernoulli}\bigl(\mathrm{expit}(0.3)\bigr), \label{eq:sim1-covariates} \\
  A &\sim \mathrm{Bernoulli}(0.5), \label{eq:sim1-treatment} \\
  Y_k &= 1 + \beta_{A,k}\, A + \beta_{W_1,k}\, W_1 + \beta_{W_2,k}\, W_2 + \varepsilon_k, \quad \varepsilon_k \overset{\mathrm{iid}}{\sim} N(0,1), \quad k = 1, 2, \label{eq:sim1-outcome}
\end{align}
where $W_1$ and $W_2$ are baseline covariates, $A$ is the treatment indicator under 1:1 randomization, and the covariate effects are set to $(\beta_{W_1,1}, \beta_{W_1,2}) = (-0.1, -0.05)$ and $(\beta_{W_2,1}, \beta_{W_2,2}) = (0.6, 0.3)$.
We consider four treatment effect configurations to assess Type~I error control and power under varying alternatives:
S1 (global null) sets $\beta_{A,1} = \beta_{A,2} = 0$;
S2 (strong effect on $Y_1$ only) sets $\beta_{A,1} = 1.0$, $\beta_{A,2} = 0$;
S3 (equal moderate effects) sets $\beta_{A,1} = \beta_{A,2} = 0.5$;
S4 (asymmetric effects) sets $\beta_{A,1} = 0.8$, $\beta_{A,2} = 0.2$.

We compare the proposed stabilized CV-TMLE with three commonly used approaches: the O'Brien OLS global test \citep{obrien1984}, and the Holm step-down \citep{holm1979} and Hochberg step-up \citep{hochberg1988} multiplicity adjustment procedures.
The O'Brien OLS test is applied to TMLE estimates, with the influence-curve-based covariance matrix used for variance estimation.
The Holm and Hochberg procedures are applied to individual $t$-statistics constructed from the endpoint-specific TMLE estimates.
The stabilized CV-TMLE uses $V = 10$ folds, stabilization constant $C = 0.25$, and equal reference weights $\aref = (0.5, 0.5)$ corresponding to the O'Brien aggregation.
All four methods use TMLE with correctly specified outcome model and the known propensity score $g_0 = 0.5$ for covariate adjustment, so that the comparison isolates the effect of the testing procedure. The small-sample t-distribution calibration is used for all methods.
We note that this setup is somewhat idealized: in practice, the true outcome model is unknown, and it is encouraged to use the Super Learner \citep{vanderLaan2007super} to avoid model misspecification.
To illustrate the importance of covariate adjustment, we also include an unadjusted O'Brien rank-sum test for reference.

Tables~\ref{tab:sim1-power} and~\ref{tab:sim1-weights} report the Type~I error and power, and the average weights learned by the stabilized CV-TMLE, across all four scenarios.
Under the global null (S1), all methods control the Type~I error rate at the nominal level, and the stabilization mechanism shrinks the learned weights toward the reference $\aref = (0.5, 0.5)$, yielding near-equal weights on average.
When the treatment effect is concentrated on a single endpoint (S2), the stabilized CV-TMLE learns to place most of its weight on $Y_1$ and achieves the highest power.
The Holm and Hochberg procedures also perform well, as the strong marginal signal on $Y_1$ remains significant after their respective multiplicity adjustments, whereas the O'Brien OLS test, by assigning equal weight to both endpoints, substantially dilutes the concentrated signal and suffers a marked loss in power.
Conversely, when effects are equal and moderate (S3), the O'Brien OLS test is most powerful because its equal-weight aggregation is optimal under uniform effects.
Although the stabilized CV-TMLE learns correct weights on average, it incurs a power cost from finite-sample variability in the learned weights, resulting in moderately lower power than O'Brien.
The Holm and Hochberg procedures are most conservative in this scenario, as the moderate per-endpoint signals are individually insufficient to achieve significance after multiplicity correction.
Under asymmetric effects (S4), the learned weights approximately reflect the ratio of the true effect magnitudes, and the stabilized CV-TMLE achieves the highest power, outperforming all three comparators.

Finally, comparing the adjusted and unadjusted O'Brien columns in Table~\ref{tab:sim1-power} highlights the value of covariate adjustment: the unadjusted O'Brien rank-sum test suffers substantial power loss under every alternative scenario---for example, dropping from 0.657 to 0.478 in S2---underscoring the importance of covariate adjustment in rare disease trials where sample sizes are small.

\begin{center}
[Table 1 near here]
\end{center}

\begin{center}
[Table 2 near here]
\end{center}

Across all scenarios, the stabilized CV-TMLE is the only method that remains competitive regardless of the unknown treatment effect pattern: it achieves the highest power when effects are heterogeneous (S2, S4) and avoids severe power loss when effects are uniform (S3).
This robustness to the unknown effect configuration is the key practical advantage of the data-adaptive approach.

\subsection{Study 2}\label{sec:sim-realistic}

To evaluate the proposed method in a clinically realistic setting, we design a simulation calibrated to the pivotal Phase~III trial of ELAPRASE (idursulfase) for Hunter syndrome (Mucopolysaccharidosis~II, MPS~II), as described in Section~14.1 of the FDA-approved prescribing information \citep{elaprase2018}.
That trial was a 53-week, randomized, double-blind, placebo-controlled study of 96 patients (ages 5--31) with iduronate-2-sulfatase deficiency and percent-predicted forced vital capacity (\%-predicted FVC) below 80\%.
Patients were randomized to three arms: ELAPRASE 0.5~mg/kg weekly ($n = 32$), ELAPRASE 0.5~mg/kg every other week ($n = 32$), or placebo ($n = 32$).
The primary efficacy endpoint was a two-component composite score analyzed using an O'Brien rank-sum test, based on the sum of ranks of the change from baseline to Week~53 in (i) the six-minute walk test (6MWT) distance and (ii) the \%-predicted FVC.
The primary comparison (weekly versus placebo) yielded $p = 0.0049$.
Examination of the individual components showed that the weekly group experienced a 35-meter greater mean increase in 6MWT distance ($p = 0.01$) and a 4.3 percentage-point greater mean increase in \%-predicted FVC ($p = 0.07$) relative to placebo, after ANCOVA adjustment for baseline disease severity, region, and age.

Our simulation mimics the weekly-versus-placebo comparison with $n = 60$ patients (30 per arm).
Covariates include age, region, and disease severity.
Baseline 6MWT distance and \%-predicted FVC are drawn from a truncated bivariate normal distribution with covariate-dependent means and marginal standard deviations calibrated to Table~5 of the label, with FVC truncated below 80\%.
Change scores from baseline to Week~53 are generated from arm-specific bivariate normal distributions with moderate covariate effects.
We evaluate two scenarios: under the global null, the active-arm change distribution is identical to the placebo distribution; under the trial-calibrated alternative, the mean changes for both arms are set to the values observed in Table~5 of the label.
The complete data-generating process is specified in Appendix~\ref{app:dgp-study2}.

We compared the proposed stabilized CV-TMLE with the O'Brien rank-sum test used in the actual ELAPRASE trial.
The stabilized CV-TMLE is applied to the continuous change scores with covariate adjustment using a linear working model for age, region, and disease severity, $V = 10$ cross-validation folds, stabilization constant $C = 2$, and equal reference weights $\alpha_{\mathrm{ref}} = (0.5, 0.5)$.
The O'Brien rank-sum test is applied to the ranked change scores, with $5{,}000$ permutations employed for inference.

\begin{center}
[Table 3 near here]
\end{center}

Table~\ref{tab:sim2-power} reports the Type~I error and power for the two methods.
Both methods maintain the Type~I error rate at the nominal level.
Under the trial-calibrated alternative, the stabilized CV-TMLE achieves a relative power improvement of approximately 14\% over the O'Brien rank-sum test.
This gain reflects two complementary advantages of the proposed approach: covariate adjustment exploits the prognostic value of baseline age, disease severity, and region (the same covariates for which the actual trial's ANCOVA adjusted), while data-adaptive weighting allows the composite to concentrate on the more informative endpoint.
For trials where enrollment is inherently limited, this improvement translates directly into a greater chance of detecting a true treatment benefit.

\section{Conclusion}\label{sec:conclusion}

We have proposed a stabilized CV-TMLE framework for testing the global null hypothesis of no treatment benefit across multiple endpoints.
The procedure addresses two key limitations of classical global tests: it replaces fixed aggregation schemes with data-adaptive weight selection that maximizes power, and it incorporates flexible covariate adjustment through the TMLE framework.
Cross-validation separates the weight learning from estimation and inference, enabling valid inference for a data-adaptive target parameter, while the stabilization mechanism ensures Type~I error control and allows investigators to encode domain knowledge through the choice of reference weights.
Two simulation studies, one examining a range of treatment effect configurations and the other calibrated to a Phase~III rare disease trial, demonstrate valid Type~I error control and power gains over commonly used global tests and multiplicity adjustment methods.
Because the treatment effect pattern across endpoints is typically unknown at the design stage, the ability to learn the optimal composite weighting from the observed data, rather than committing to a fixed scheme based on anticipated effect sizes, is of considerable practical value.

Several directions for future work remain.
First, the stabilization mechanism introduced here shrinks the data-adaptive weights toward a fixed reference based on the training-fold $p$-value; investigating alternative shrinkage strategies---such as empirical Bayes shrinkage---may yield improved finite-sample properties.
Second, missing data and informative dropout are common in rare disease trials; extending the framework to accommodate these is an important practical direction.
Third, although our development and simulations are motivated by rare disease trials with small sample sizes, the stabilized CV-TMLE framework is broadly applicable to any setting that involves testing a global null across multiple endpoints, including large confirmatory trials, platform trials, and observational comparative effectiveness studies.

\section*{Disclosure of interest}
Tianyue Zhou reports tuition and stipend support from a philanthropic gift from the Novo Nordisk corporation to the University of California, Berkeley to support the Joint Initiative for Causal Inference.

The views are those of the author(s) and do not necessarily represent the official views of, nor an endorsement by, FDA, HHS, or the U.S. Government.



\appendix

\section{Data-generating process for Study~2}\label{app:dgp-study2}

This appendix specifies the complete data-generating process for the realistic simulation in Section~\ref{sec:sim-realistic}.
All numerical parameters are calibrated to Table~5 of the ELAPRASE prescribing information \citep{elaprase2018}.

\paragraph{Covariates}
For each of $n = 60$ subjects:
\begin{align}
  \text{Age}_i &\sim \mathcal{N}(15,\; 6^2) \text{ truncated to } [5, 31], \notag \\
  \text{Region}_i &\in \{\text{NA},\; \text{EU},\; \text{Other}\} \text{ with probabilities } (0.40,\; 0.40,\; 0.20), \notag \\
  \text{Severity}_i &\in \{\text{mild},\; \text{moderate},\; \text{severe}\} \text{ with probabilities } (0.31,\; 0.38,\; 0.31).
\end{align}

\paragraph{Centered design matrix}
Define the five-column centered design matrix
\[
  X_i = \bigl(\text{age10}_i,\; \text{sev\_mod}_i^c,\; \text{sev\_sev}_i^c,\; \text{EU}_i^c,\; \text{Other}_i^c\bigr),
\]
where
\begin{align}
  \text{age10}_i &= \bigl(\text{Age}_i - \mathbb{E}[\text{Age}]\bigr) / 10, \notag \\
  \text{sev\_mod}_i^c &= \mathbf{1}(\text{Severity}_i = \text{moderate}) - 0.38, \notag \\
  \text{sev\_sev}_i^c &= \mathbf{1}(\text{Severity}_i = \text{severe}) - 0.31, \notag \\
  \text{EU}_i^c &= \mathbf{1}(\text{Region}_i = \text{EU}) - 0.40, \notag \\
  \text{Other}_i^c &= \mathbf{1}(\text{Region}_i = \text{Other}) - 0.20,
\end{align}
where $\mathbb{E}[\text{Age}]$ is the population mean of the truncated normal distribution and age is rescaled by a factor of 10 to place it on a comparable scale to the indicator covariates.

\paragraph{Covariate effect matrices}
Linear covariate effects on baseline and change outcomes are specified by the $2 \times 5$ matrices
\[
  M_B = \begin{pmatrix} -15 & -60 & -120 & -10 & 5 \\ -1.5 & -6 & -12 & -1.5 & 0.75 \end{pmatrix}, \qquad
  M_D = \begin{pmatrix} -8 & -18 & -40 & -4 & 2 \\ -1 & -5 & -10 & -0.8 & 0.5 \end{pmatrix},
\]
where each row corresponds to an outcome (6MWT, FVC) and each column to a covariate in $X_i$.
The coefficients are calibrated so that the covariate-driven component explains approximately 12--18\% of the total outcome variance.

\paragraph{Baseline outcomes}
The baseline outcomes $B_i = (B_{\text{6MWT},i},\; B_{\text{FVC},i})^\top$ are generated as
\[
  B_i \mid X_i \sim \text{TruncatedBVN}\!\left(\mu_B + M_B X_i,\; \Sigma_{B,\text{resid}};\; [50, 650] \times [20, 80)\right),
\]
with target marginal parameters $\mu_B = (392.5,\; 55.45)$, $\text{SD}_B = (107,\; 14)$, and $\rho_B = 0.30$.
The residual covariance $\Sigma_{B,\text{resid}} = \Sigma_{B,\text{target}} - M_B \,\text{Var}(X) \, M_B^\top$ is computed so that the marginal covariance approximately matches $\Sigma_{B,\text{target}}$.

\paragraph{Treatment assignment}
Treatment is assigned by balanced randomization: $A_i \in \{0, 1\}$ with exactly $n/2$ subjects per arm.

\paragraph{Change scores}
The change from baseline to Week~53, $\Delta_i = (\Delta_{\text{6MWT},i},\; \Delta_{\text{FVC},i})^\top$, is generated as
\[
  \Delta_i \mid A_i, X_i \sim \text{BVN}\!\left(\mu_{\Delta}^{(A_i)} + M_D X_i,\; \Sigma_{\Delta,\text{resid}}^{(A_i)}\right),
\]
with the following arm-specific parameters:

\smallskip
\begin{center}
\begin{tabular}{lccc}
  \toprule
  & Mean change & SD & Correlation \\
  \midrule
  Placebo ($A = 0$) & $(7,\; 0.8)$ & $(54,\; 9.6)$ & $\rho_D = 0.25$ \\
  Active ($A = 1$, trial-calibrated) & $(44,\; 3.4)$ & $(70,\; 10.0)$ & $\rho_D = 0.25$ \\
  \bottomrule
\end{tabular}
\end{center}

\smallskip
\noindent
Under the global null scenario, the active-arm parameters are set equal to the placebo parameters (including equal variances).
The residual covariance matrices $\Sigma_{\Delta,\text{resid}}^{(a)}$ are computed analogously to the baseline case.

\paragraph{Analysis outcomes}
The analysis endpoints are the continuous changes $(\Delta\text{6MWT}_i,\; \Delta\text{FVC}_i)$, used by the stabilized CV-TMLE, and the rank scores $Y_{k,i} = \text{rank}(\Delta_{k,i})$ across all subjects, used by the O'Brien rank-sum test.



\clearpage
\begin{table}[p]
  \centering
  \caption{Type~I error and power under the Study~1 DGP ($n = 50$, $\gamma = 0.025$, 1{,}000 replications). The first four methods use TMLE-based covariate adjustment; the last column shows the unadjusted O'Brien rank-sum test.}
  \label{tab:sim1-power}
  \begin{tabular}{llccccc}
    \toprule
    Scenario & $(\beta_{A,1}, \beta_{A,2})$ & Holm & Hochberg & O'Brien & Stab.\ CV-TMLE & Unadj.\ O'Brien \\
    \midrule
    S1 (Type~I error) & $(0, 0)$       & 0.021 & 0.021 & 0.022 & 0.023 & 0.028 \\
    S2 (Strong $Y_1$) & $(1.0, 0)$     & 0.858 & 0.858 & 0.657 & 0.873 & 0.478 \\
    S3 (Equal)         & $(0.5, 0.5)$   & 0.472 & 0.487 & 0.645 & 0.562 & 0.526 \\
    S4 (Asymmetric)    & $(0.8, 0.2)$   & 0.668 & 0.672 & 0.648 & 0.700 & 0.504 \\
    \bottomrule
  \end{tabular}
  \par\smallskip
  {\footnotesize \textit{Note.} TMLE = targeted maximum likelihood estimation; CV-TMLE = cross-validated TMLE; Stab.\ CV-TMLE = stabilized CV-TMLE; Unadj.\ = unadjusted analysis; Holm = Holm step-down procedure; Hochberg = Hochberg step-up procedure; O'Brien = O'Brien (1984) combined-test statistic; DGP = data-generating process; $n$ = total sample size; $\gamma$ = one-sided significance level; $C$ = stabilization constant ($C = 0.25$ in this study); $(\beta_{A,1}, \beta_{A,2})$ are the treatment effects on endpoints $Y_1$ and $Y_2$ in the linear outcome model. Entries are empirical rejection rates over 1{,}000 replications. Type~I error rate is reported under the global null.}
\end{table}

\clearpage
\begin{table}[p]
  \centering
  \caption{Average stabilized CV-TMLE composite weights across 1{,}000 replications under the Study~1 DGP.}
  \label{tab:sim1-weights}
  \begin{tabular}{lcc}
    \toprule
    Scenario & $\bar{\alpha}_1$ & $\bar{\alpha}_2$ \\
    \midrule
    S1 (Global null)   & 0.510 & 0.490 \\
    S2 (Strong $Y_1$)  & 0.887 & 0.113 \\
    S3 (Equal)         & 0.496 & 0.504 \\
    S4 (Asymmetric)    & 0.764 & 0.236 \\
    \bottomrule
  \end{tabular}
  \par\smallskip
  {\footnotesize \textit{Note.} Entries $\bar{\alpha}_1$ and $\bar{\alpha}_2$ are the average data-adaptive weights placed on endpoints $Y_1$ and $Y_2$ by the stabilized CV-TMLE, computed across 1{,}000 replications. S1 = global null, $(\beta_{A,1}, \beta_{A,2}) = (0, 0)$; S2 = strong effect on $Y_1$, $(1.0, 0)$; S3 = equal moderate effects, $(0.5, 0.5)$; S4 = asymmetric effects, $(0.8, 0.2)$. Stab.\ CV-TMLE = stabilized cross-validated targeted maximum likelihood estimation; DGP = data-generating process.}
\end{table}

\clearpage
\begin{table}[p]
  \centering
  \caption{Type~I error and power for the unadjusted O'Brien rank-sum test and the stabilized CV-TMLE under the realistic DGP calibrated to the ELAPRASE trial ($n = 60$, $\gamma = 0.025$, 1{,}000 replications).}
  \label{tab:sim2-power}
  \begin{tabular}{llcc}
    \toprule
    Scenario & Metric & O'Brien (unadj.) & Stab.\ CV-TMLE \\
    \midrule
    Global null                  & Type~I error rate & 0.027 & 0.024 \\
    Trial-calibrated alternative & Power             & 0.529 & 0.602 \\
    \bottomrule
  \end{tabular}
  \par\smallskip
  {\footnotesize \textit{Note.} CV-TMLE = cross-validated targeted maximum likelihood estimation; Stab.\ CV-TMLE = stabilized CV-TMLE; Unadj.\ = unadjusted analysis; O'Brien = O'Brien (1984) combined-test statistic; DGP = data-generating process calibrated to the Phase~III ELAPRASE rare-disease trial; ELAPRASE = idursulfase; ``Trial-calibrated alternative'' refers to active-arm mean changes calibrated to the values reported in the ELAPRASE prescribing information (Section~14.1); $n$ = total sample size ($n = 60$, with 30 per arm); $\gamma$ = one-sided significance level; $C$ = stabilization constant ($C = 2$ in this study). Entries are empirical rejection rates over 1{,}000 replications.}
\end{table}

\end{document}